\definecolor{bleudefrance}{rgb}{0.19, 0.55, 0.91}
\newcommand{\prob}[1]{\mathrm{{Pr}}\left[ #1\right]}
\newcommand{\wt}[1]{\mathrm{wt}_{\mathrm{H}}(#1)} % Hamming weight
\newcommand{\dist}[1]{\mathrm{d}_{\mathrm{H}}(#1)} % Hamming distance
\newcommand{\jaccard}[1]{\mathrm{J_w}(#1)} % Jaccard index
\newcommand{\F}{\mathbb{F}}
\newcommand{\Fq}{\F_q}
\newcommand{\bindis}[1]{\ensuremath{\mathrm{Bin}\!\left(#1\right)}} % Binomial distribution
\newcommand{\bincdf}[1]{\ensuremath{F_{\mathrm{Bin}}\!\left(#1\right)}} % Cumulative distribution function of Binomial distribution
\newcommand{\E}{\mathbb{E}} % Expectation
\newcommand{\set}[1]{\left\lbrace #1 \right\rbrace}
\newcommand{\card}[1]{\left| #1 \right|}
\newcommand{\st}{\; | \;}
\newcommand{\pro}{\mathsf{pro}}
\newcommand{\chal}{\mathsf{chal}}
\newcommand{\resp}{\mathsf{resp}}
\newcommand{\outp}{\mathsf{out}}
\newcommand{\helper}{\mathsf{help}}
\newcommand{\setprofile}{\mathcal{S}_{\pro}}
\newcommand{\setchal}{\mathcal{S}_{\chal}}
\newcommand{\setresp}{\mathcal{S}_{\resp}}
\newcommand{\setext}{\mathcal{S}_{\outp}}
\newcommand{\sethelper}{\mathcal{S}_{\helper}}
\newcommand{\para}[1]{\alpha_{\mathrm{#1}}}
\renewcommand{\to}{\longrightarrow}
\renewcommand{\mapsto}{\longmapsto}
\newcommand{\keygen}{\mathsf{KeyGen}}
\newcommand{\gen}{\mathsf{Gen}}
\newcommand{\mac}{\mathsf{MAC}}
\newcommand{\macforge}{\mathsf{MAC\text{-}forge}}
\newcommand{\mack}{\mathsf{MAC}_{k}}
\newcommand{\verify}{\mathsf{Verify}}
\newcommand{\eval}{\mathsf{Eval}}
\newcommand{\attack}{\mathsf{Clone}}
\newcommand{\sattack}{\mathsf{Predict}}
\newcommand{\negl}{\mathsf{negl}}
\newcommand{\extract}{\mathsf{Extract}}
\newcommand{\PF}[1]{\mathsf{PF}_{#1}}
\newcommand{\cf}[1]{\mathsf{CF}_{#1}} % Chemical Function
\newcommand{\cfs}[1]{\mathsf{CFS}_{#1}} % Chemical Function System
\newcommand{\cfi}{\mathsf{CFI}} % Chemical Function Infrastructure
\newcommand{\profile}{\mathrm{p}} 
\newcommand{\physcomp}{\mathrm{p}}
\newcommand{\adversary}{\mathscr{A}}
\newcommand{\A}{\mathsf{A}}
\newcommand{\C}{\mathsf{C}}
\newcommand{\G}{\mathsf{G}}
\newcommand{\T}{\mathsf{T}}
\theoremstyle{plain}%
\newtheorem{theorem}{Theorem}%  meant for continuous numbers
\newtheorem{proposition}[theorem]{Proposition}% 
\newtheorem{corollary}[theorem]{Corollary}%
\theoremstyle{remark}%
\newtheorem{example}{Example}%
\theoremstyle{definition}%
\newtheorem{definition}{Definition}%
\pgfplotsset{compat=1.18}
\begin{document}
%
% paper title
% can use linebreaks \\ within to get better formatting as desired
\title{A Security Framework for Chemical Functions}
%
%
% author names and IEEE memberships
% note positions of commas and nonbreaking spaces ( ~ ) LaTeX will not break
% a structure at a ~ so this keeps an author's name from being broken across
% two lines.
% use \thanks{} to gain access to the first footnote area
% a separate \thanks must be used for each paragraph as LaTeX2e's \thanks
% was not built to handle multiple paragraphs
%

%\author{1,~\IEEEmembership{Student Member,~IEEE,}
%        2,~\IEEEmembership{Member,~IEEE,}
%        3,~\IEEEmembership{Senior Member,~IEEE,}
%\thanks{1,2 are with the Department of }% <-this % stops a space
%\thanks{3 is with the School of .}% <-this % stops a space
%\thanks{TMBMC version based on Michael Shell's bare{\textunderscore}jrnl.tex version 1.3.}

\author{Frederik~Walter,
       Hrishi~Narayanan, 
       Jessica~Bariffi,
       Anne~Lüscher,
       Rawad~Bitar,
       Robert~Grass,
       Antonia~Wachter-Zeh,
       and Zohar~Yakhini
\thanks{F. Walter, H. Narayanan, J. Bariffi, R. Bitar and A. Wachter-Zeh are with the Department of Computer Engineering,
                   Technical University of Munich, Germany.}% <-this % stops a space
  \thanks{A. Lüscher and R. Grass are with the Department of Chemistry and Applied Biosciences, ETH Zürich, Switzerland.}
\thanks{Z. Yakhini is with the School of Computer Science, Reichmann University, Herzlia, Israel and with the Computer Science Faculty, Technion, Haifa, Israel.}% <-this % stops a space
}

\begin{acronym}
    \acro{abe}[ABE]{\emph{adenine base editor}}
    \acro{cbe}[CBE]{\emph{cytosine base editor}}
    \acro{vap}[VAP]{\emph{verifiable authentication protocol}}
    \acro{mac}[MACS]{\emph{Message Authentication Code System}}
    \acro{uprot}[UVAP]{\emph{unclonable verifiable authentication protocol}}
    \acro{crp}[CRP]{\emph{Challenge-Response Pair}}
    \acro{secure}[UD-CCA]{\emph{Unclonable DNA for Chosen Challenge Attacks}}
    \acro{ssecure}[UR-CCA]{\emph{Unpredictable Response for Chosen Challenge Attacks}}
    \acro{eufcma}[EUF-CMA]{\emph{existentially unforgeable under an adaptive chosen-message attack}}
    \acro{puf}[PUF]{\emph{Physically Unclonable Function}}
    \acro{pf}[PF]{\emph{Physical Function}}
    \acro{pfs}[PFS]{\emph{Physical Function System}}
    \acro{cf}[CF]{\emph{Chemical Function}}
    \acro{cfs}[CFS]{\emph{Chemical Function System}}
    \acro{cfi}[CFI]{\emph{Chemical Function Infrastructure}}
    \acro{cuf}[CUF]{\emph{Chemically Unclonable Function}}
    \acro{dna}[DNA]{\emph{Deoxyribonucleic acid}}
    \acro{pcr}[PCR]{\emph{Polymerase Chain Reaction}}
    \acro{nist}[NIST]{\emph{National Institute of Standards and Technology}}
    \acro{ordna}[orDNA]{\emph{operable random DNA}}
    \acro{gse}[GSE]{\emph{Genomic Sequence Encryption}}
\end{acronym}

% make the title area
\maketitle

\begin{abstract}
    In this paper, we introduce chemical functions, a unified framework that models chemical systems as noisy challenge--response primitives, and formalize the associated chemical function infrastructure. Building on the theory of physical functions, we rigorously define robustness, unclonability, and unpredictability for chemical functions in both finite and asymptotic regimes, and specify security games that capture the adversary's power and the security goals.
We instantiate the framework with two existing DNA-based constructions (operable random DNA and Genomic Sequence Encryption) and derive quantitative bounds for robustness, unclonability, and unpredictability. Our analysis develops maximum-likelihood verification rules under sequencing noise and partial-edit models, and provides high-precision estimates based on binomial distributions to guide parameter selection.
The framework, definitions, and analyses yield a reproducible methodology for designing chemically unclonable authentication mechanisms. We demonstrate applications to in-product authentication and to shared key generation using standard extraction techniques.
   
\end{abstract}
% IEEEtran.cls defaults to using nonbold math in the Abstract.
% This preserves the distinction between vectors and scalars. However,
% if the journal you are submitting to favors bold math in the abstract,
% then you can use LaTeX's standard command \boldmath at the very start
% of the abstract to achieve this. Many IEEE journals frown on math
% in the abstract anyway.

% Note that keywords are not normally used for peerreview papers.
\begin{IEEEkeywords}
    Chemically Unclonable Function, CUF, Operable Random DNA, Genomic Sequence Encryption, Cryptography, Message Authentication, DNA-based Authentication, In-Product Authentication
\end{IEEEkeywords}

% For peer review papers, you can put extra information on the cover
% page as needed:
% \ifCLASSOPTIONpeerreview
% \begin{center} \bfseries EDICS Category: 3-BBND \end{center}
% \fi
%
% For peerreview papers, this IEEEtran command inserts a page break and
% creates the second title. It will be ignored for other modes.
\IEEEpeerreviewmaketitle

%~~~~~~~~~~~~~~~~~~~~~~~~~~~~~~~~~~~~~~~~~~~~~~~~~~
%   Section: Introduction
\section{Introduction}\label{sec:introduction}

Across supply chains, healthcare, and critical infrastructure, there is an ever-present need to bind trust to the physical world. While cryptography provides rigorous guarantees for digital artifacts, ensuring authenticity, integrity, and provenance of physical matter remains challenging. Modern taggants, labels, and device-bound fingerprints (e.g., \acp{puf}) offer valuable anchors but are typically confined to surfaces, specific components, or narrowly scoped measurement setups. This motivates security mechanisms whose guarantees arise from the behavior of chemical systems themselves—mechanisms that can be embedded throughout materials and verified via standard laboratory operations.

We advocate a unifying view based on chemical functions: mappings from chemical states (e.g., compositions, structures, or sequence pools) and admissible operations (e.g., amplification, separation, hybridization, or reaction protocols) to measurable outputs (e.g., sequencing reads, spectra, or amplification curves). In this view, the complexity originates in intrinsic physical, chemical, or biological processes. This generalizes hardware-centric notions such as \acp{puf} to settings where the function is realized by chemistry, and where the verifier interacts by applying laboratory procedures under a specified protocol. \acp{cuf} as first introduced in \cite{luescher_2024_ChemicalUnclonablea} arise as an important subclass: they are designed so that reproducing an equivalent chemical object is infeasible even for powerful adversaries with laboratory access. 
We see the key difference between chemical functions and physical functions as the ability to be divided into smaller parts and distributed across space, whereas physical functions are usually bound to a specific device or component. 

Despite rapid progress, prior works are often presented with domain-specific abstractions and incomparable metrics, making it difficult to assess fundamental trade-offs, reason about adversarial capabilities, or compose mechanisms. In particular, a lack of common terminology for (i) parties and roles, (ii) power of the adversary, (iii) security goals, and (iv) security notions (e.g., unclonability, unpredictability) hinders systematic analysis. 

This paper introduces a framework for chemical functions that abstracts chemical mechanisms into analyzable components and aligns them with cryptographic methodology. 
We formalize the components of chemical functions and the necessary infrastructure, adversary power, security goals, and specific properties of chemical functions. 

Concretely, our contributions are:
\begin{itemize}
    \item We define the chemical function abstraction that generalizes device-focused notions (e.g., \acp{puf}) to chemistry-driven divisible settings and captures \acp{cuf} as a special case.
    \item We define key properties of chemical functions relevant to their practical application in security applications. 
    \item We formulate security goals and adversary capabilities for unclonability and unpredictability. 
    \item We instantiate the framework with chemical authentication schemes and \ac{dna}-based mechanisms, including operable random sequence pools and genomic sequence encryption, showing how protocol choices and measurement noise map to security and reliability guarantees.
\end{itemize}

By decoupling chemical specifics from security reasoning, the framework enables principled design, analysis, and comparison of chemical functions. It clarifies how chemical operations act as cryptographic resources, how access policies constrain adversaries, and how verification can be made robust in realistic laboratory conditions.

The paper is organized as follows.
In \Cref{sec:related} we introduce preliminary concepts such as message authentication schemes and classical security notions, and we discuss related works needed throughout the paper. Translating the terminology and concepts of physically unclonable functions, we introduce the chemical function framework in \Cref{sec:chemical_functions}. In \Cref{sec:properties}, we develop properties and metrics relevant to the security and reliability of chemical functions in both a finite and an asymptotic domain. We show how chemical functions can be used to realize authentication schemes and key generation protocols in \Cref{sec:application}. In \Cref{sec:genomic-sequence-encryption}, we reformulate the genomic sequence encryption scheme from \cite{volf_2023_CryptographyDNA} as a possible authentication scheme and apply our framework to that chemical function. Finally, \Cref{sec:conclusions} concludes the paper and outlines future directions.

%~~~~~~~~~~~~~~~~~~~~~~~~~~~~~~~~~~~~~~~~~~~~~~~~~~
%   Section: Related Works
\section{Preliminaries and Related Works}\label{sec:related}
In this section, we recap the main definitions and concepts that lay the mathematical and historical foundation needed throughout this paper. For the analysis of existing schemes, we will use $\Fq$ to denote a finite field of order $q$ where $q$ is a prime power. Additionally, for any length-$n$ vector $x \in \Fq^n$ we denote by $\wt{x}$ its \emph{Hamming weight}, that is $\wt{x} = \card{\set{i = 1, \ldots, n \st x_i \neq 0}}$. Similarly, for $x, y\in \Fq^n$, we use $\dist{x, y}$ to denote their \emph{Hamming distance}, i.e.,  the number of positions in which $x$ and $y$ differ.
%~~~~~~~~~~~~~~~~~~~~~~~~~~~~~~~~~~~~~~~~~~~~~~~~~~~~~~~~~~~~~
%   SUBSECTIONS
\subsection{Authentication Schemes}\label{subsec:authentication}
Authentication schemes are protocols used to verify the identity of a party or the integrity of data. That is, a receiver of a message is able to verify the sender's identity (\emph{authentication}) as well as the integrity of the message to make sure it has not been altered (\emph{integrity}). We require that such a scheme cannot be broken using a polynomial-time attack with ``relatively good probability'' of success. 
The polynomial running time, as well as the probability of success of the attack, are defined as functions of a previously fixed \emph{security parameter} $n$, which is publicly known, e.g., also to any adversary. 
In other words, an authentication scheme is considered to be secure if no polynomial-time attack can break the system with non-negligible probability.

In the following, we focus on symmetric authentication schemes. That is, we assume that the communicating parties share common secret information, which is the \emph{secret key}. One of the most prominent symmetric authentication schemes is a \ac{mac}.
To ensure that the received message identifies its sender and has not been tampered with, a tag is added to the message using the shared secret key. In a nutshell, a \ac{mac} is a protocol consisting of the following three main steps: key generation, tagging, and verification.

\begin{definition}(MACS~\cite[Definition 4.1]{katz_2020_IntroductionModern})\label{def:MAC_scheme}
    Given a security parameter $n$, a \acf{mac} is an
    authentication protocol given by a triple $(\keygen, \mathsf{MAC}, \verify)$ of probabilistic polynomial-time algorithms defined as follows.
    \begin{enumerate}
        \item The algorithm $\keygen$ randomly generates a key $k \in \set{0, 1}^N$ with $N \geq n$. 
        \item The authentication algorithm is a function $\mack$ with respect to a given $k \in \set{0,1}^N$ mapping a message $m \in \set{0,1}^\star$ of arbitrary length $\star$ to a \emph{tag} $t \in \set{0,1}^\star$, i.e., $t = \mack(m)$.
        \item Taking the inputs $k, m$ and $t$, the algorithm $\verify$ outputs a bit $b \in \set{0,1}$, with $b=1$ meaning valid and $b=0$ meaning invalid.
        \item For any positive integer $n$, any key $k \in \set{0,1}^n$ and any message $m\in \set{0,1}^\star$ the verification algorithm must satisfy $\verify_{k}(m, \mack(m)) = 1$.
    \end{enumerate}
\end{definition}

For a \ac{mac} to be secure, we want an attack 
to have only a ``small probability of success''. This means,
any attack's success probability should grow slower than $p(n)^{-1}$, where $p$ is a polynomial in the security parameter $n$. We refer to such a function as \emph{negligible} defined as follows.

\begin{definition}(Negligible Function~\cite[Definition 3.5]{katz_2020_IntroductionModern})\label{def:negligible}
    A function $f: \mathbb{N} \rightarrow \mathbb{R}_{\geq 0}$ is said to be \emph{negligible} if for every polynomial $p(n)$ there exists a positive integer $N$ such that for all integers $n > N$ it holds that $f(n) < p(n)^{-1}$. We denote $f \in \negl(n)$.
\end{definition}

Mentioning success probabilities of attacks raises the question of what it means for a \ac{mac} to be broken. We assume that an adversary has access to an oracle version of the \ac{mac} allowing the adversary to request a \ac{mac} tag for any message of their choice. Eventually, the \ac{mac} scheme is considered to be broken, if in polynomial time the adversary is able to forge a valid signature on any new message, not previously chosen and signed. We formally define this attack scenario as a \emph{security game} analogously to \cite{katz_2020_IntroductionModern}.

\begin{definition}(Message Authentication Forgery Game $\macforge$)\label{def:MAC_security-game}
    Given a message authentication code system $\Pi = (\keygen, \mac, \verify)$, an adversary $\adversary$ with oracle access to the authentication algorithm and a positive integer $n$. The \emph{message authentication forgery game}, $\macforge_{\adversary, \Pi}(n)$, is defined as follows.
    \begin{enumerate}
        \item Choose a random key $k \in \set{0,1}^N$ with $N \geq n$. 
        \item The adversary $\adversary$ repeatedly chooses a message $m$ from the message space, and obtains the tag $t = \mac_k(m)$. Let $\mathcal{Q}$ denote the set of messages chosen by $\adversary$.
        \item The security game is defined to be successful, i.e., $\macforge_{\adversary, \Pi}(n)=1$, if and only if $\adversary$ is able to generate a message-tag pair $(m',t')$, where $m' \not \in \mathcal{Q}$, that verifies correctly under $\verify_k$. That is, $\verify_k(m', t') = 1$.
    \end{enumerate}
\end{definition}

For a \ac{mac} scheme to be considered secure, no polynomial-time attack should succeed in the security game with non-negligible probability. 
This yields the notion of \ac{eufcma} security defined as follows.
\begin{definition}(EUF-CMA security~\cite[Definition 4.2]{katz_2020_IntroductionModern})\label{def:EUF-CMA}
    Let $\Pi = (\keygen, \mac, \verify)$ be a message authentication code system. We say $\Pi$ is \acf{eufcma} if for all adversaries $\adversary$ capable of a probabilistic polynomial-time attack, we have
    \begin{align*}
        \prob{\macforge_{\adversary, \Pi}(n) = 1} \in \negl(n).
    \end{align*}
\end{definition}

\ac{eufcma} security is the standard security notion for digital signature schemes and authentication schemes required by the \ac{nist} \cite{nationalinstituteofstandardsandtechnologyus_2016_SubmissionRequirements}.

\subsection{Physically Unclonable Functions}\label{subsec:PUFs}
With the goal of ensuring hardware security, i.e., to protect a secret key in vulnerable hardware, \acp{puf} provide a viable way to securely regenerate keys (see, for instance, \cite{armknecht_2010_MemoryLeakageResilient,skoric_2005_RobustKey,lim_2005_ExtractingSecret}). The concept of a \ac{puf} was first introduced by Pappu in \cite{pappu_2002_PhysicalOneWay} using the term Physical One-Way Function, and shortly thereafter in \cite{gassend_2002_SiliconPhysical} with the term Physical Random Function. Concisely, a \ac{puf} is a probabilistic function derived from the behavior of a complex physical object. Upon excitation by possible stimuli (called \emph{challenges}), the function outputs the respective responses. In the last few years, \acp{puf} were used in more advanced cryptographic protocols (e.g., authentication schemes) emerging as a novel cryptographic primitive \cite{katzenbeisser_2011_RecyclablePUFs,vandijk_2012_PhysicalUnclonable}. In these cryptographic schemes, only the party currently holding the physical object is able to query for responses to chosen challenges. Due to the complexity of the \ac{puf}'s challenge-response behavior, it is not feasible using any purely numerical method to predict a response to a randomly generated challenge without holding the \ac{puf} at this exact moment of time.

Over the past few years, different notions of \acp{puf} have been introduced capturing different security aspects. In this work, we adopt the framework presented in \cite{armknecht_2011_FormalizationSecurity} which we later translate to chemical functions in Section \ref{sec:chemical_functions}. There, \acfp{pf} are introduced, and the unclonability of the \ac{pf} is defined as one possible security property.
We recap briefly the framework of \acp{pf} presented in \cite{armknecht_2011_FormalizationSecurity}.
A new instantiation of a \ac{pf} can be obtained by the generation process $\gen$ that, upon a generation parameter $\alpha_{\mathrm{cr}}$ chosen by the manufacturer, outputs a physical component $\physcomp$. We call the party triggering the generation process the \emph{manufacturer}.

Once the physical component has been generated with respect to some desired properties chosen by the manufacturer, a \ac{pf} can be defined. A \acf{pf} consists of two main components: the physical component $\physcomp$ generated by $\gen$ and an evaluation procedure $\eval$. Upon a challenge signal input $\Tilde{c}$, the physical component $\physcomp$ responds with a response signal $\Tilde{r}$. The algorithm $\eval$ takes as input a digital signal $c$, converts it to its physical representation $\Tilde{c}$, and excites the physical component $\physcomp$ with $\Tilde{c}$. Upon excitation with $\Tilde{c}$, $\physcomp$ responds with the physical signal $\Tilde{r}$ as part of the evaluation stage. Eventually, $\eval$ outputs a digital representation $r$ of $\Tilde{r}$. In the following, denote by $\setchal$ and $\setresp$ the set of (digital) challenges and responses, respectively.

\begin{definition}(Physical Function~\cite[Definition 1]{armknecht_2011_FormalizationSecurity})\label{def:PUF_physicalfunction}
    Given a parameter $\alpha_{\mathrm{PF}}$ fully defining the algorithm $\eval$. A \acf{pf} is a probabilistic procedure describing, internally, the combination of the physical component $\physcomp$ and the procedure $\eval$, i.e., 
    \begin{align*}
        \begin{array}{cccl}
            \PF{\physcomp}^{\alpha_{\mathrm{PF}}} : & \setchal & \to & \setresp \\
             & c & \mapsto & r := \eval_\physcomp(\alpha_{\mathrm{PF}}, c).
        \end{array}
    \end{align*}
\end{definition}

A \ac{pf} is strongly dependent on the physical properties defined by the component $\physcomp$. These properties are also subject to uncontrollable noise, wherefore a \ac{pf} might produce two distinct responses upon the same challenge. To cope with this issue, an extraction algorithm $\extract$ is applied to make sure that different responses to the same challenge that differ only slightly are mapped to the same output. To allow the use of error correcting codes in $\extract$, a set of helper data $\mathcal{H}$, is introduced. The \ac{pf} together with the extractor can be viewed as one probabilistic procedure, combining the physical function and the extractor algorithm. We refer to such a procedure as \acf{pfs}.

With this framework of \acp{pf} in mind, in Section \ref{sec:chemical_functions} we translate and extend the idea to the chemical world, introducing \acfp{cf} and their properties.

\subsection{Cryptographic Schemes Based on Chemical Functions}\label{subsec:DNA-based_schemes}
While \acp{puf} are already widely used in classical cryptographic schemes (as discussed in Subsection \ref{subsec:PUFs}), the use of chemical functions only recently gained more attention in cryptographic applications (see, for e.g., \cite{clelland_1999_HidingMessages,vippathalla_2023_SecureStorage,leier_2000_CryptographyDNA,grass_2020_GenomicEncryption,li_2022_GeneticPhysical,volf_2023_CryptographyDNA,luescher_2024_ChemicalUnclonablea}). 
The analogy to \acp{puf} was done in \cite{luescher_2024_ChemicalUnclonablea}, where the authors introduced their scheme as a \ac{cuf}.
For practical reasons, we will focus more specifically on the schemes presented in \cite{luescher_2024_ChemicalUnclonablea} and \cite{volf_2023_CryptographyDNA}, where the latter will serve as a running example throughout the course of this paper. Therefore, we will quickly recap the main points of the two schemes in this subsection.

\subsubsection{Genomic Sequence Encryption \cite{volf_2023_CryptographyDNA}}\hfill\\
\acf{gse} is a \ac{dna} sequencing-based encryption scheme that relies on the hardness of detecting rare point variations scattered across a large genome without prior knowledge of the modified locations. 
Thus, the secret key consists of a list of genomic coordinates in which edits can be introduced. Information is encoded using single-nucleotide edits across various positions of the genome. That is, information is encoded in binary where $0$ represents an unedited reference base and $1$ for a specific base edit. To edit the bases the authors use \acp{abe} (for converting $\A/\T$ to $\G/\C$) and \ac{cbe} (for converting $\G/\C$ to $\A/\T$). At the receiver's end, the decryption foresees targeted sequencing at the known corrupted coordinates which can be efficiently performed at high coverage. From an adversarial point of view (i.e., without the knowledge of the mutated genome coordinates), all bases of the genome need to be scanned and analyzed for mutations, which are prone to sequencing errors and expensive to do in deep coverage.

Although \ac{gse} was introduced as an encryption scheme, it can also be used for authentication. The authentication consists in determining the $0$/$1$ sequence in a given ordered subset of any set of locations. 
If that sequence matches the expected one, the authentication is successful. 

\subsubsection{Operable Random DNA \cite{luescher_2024_ChemicalUnclonablea}}\hfill\\
The concept of \ac{ordna} uses randomly manufactured items that are able to process a physical stimulus into an output that is unique to the respective input. Through the random features, it is impossible to reverse-engineer the input from the output, in analogy to a mathematical one-way function such as used in cryptographic hashes. The authors~\cite{luescher_2024_ChemicalUnclonablea} use up to $10^{10}$ randomly generated sequences. The specific instance of \ac{ordna} forms a substrate for a chemical computing unit, which is able to transform an input into an output. The input corresponds to a set of \ac{pcr} primers, which amplify a very small subset out of the billions of random sequences originally present in the \ac{ordna}. The amplified sequences are a specific fingerprint to the input (and the specific instance of random sequences) and can be identified using sequencing. The process is designed in such a way that the input cannot be read back from the output, and the random pool can be operated on but can not be copied by \ac{pcr}. We will explain the details of this scheme during the course of this paper in the examples.

%~~~~~~~~~~~~~~~~~~~~~~~~~~~~~~~~~~~~~~~~~~~~~~~~~~
%   Section: Chemical Functions Framework
\section{Framework for Chemical Functions}\label{sec:chemical_functions}
In this section, we introduce a generalized framework analogous to \acp{puf} for analyzing authentication systems based on chemical substances, which we refer to as \acfp{cf}.
% Note that our framework is general, applies to all chemical functions, and is not restricted to those satisfying the specific properties analyzed in Section \ref{sec:properties}. 
% Such a general analysis sheds light on the underlying processes and algorithms. Through modularization and abstraction, the analysis will enable discussions on which components enable specific properties, helping us analyze the authentication schemes. 
As \acp{cf} have a high degree of similarity to \acp{puf} and \acp{pf}, we draw parallels to the field and borrow some of the terminology, especially from \cite{armknecht_2011_FormalizationSecurity,ruhrmair_2009_FoundationsPhysical,ruhrmair_2010_ModelingAttacks}.
First, we formally define a \ac{cf} before introducing the surrounding infrastructure which we refer to as \acf{cfi}.
% A diagrammatic overview of our framework is shown in Figure \ref{fig:chemical_function_infrastructure_figure}
\begin{figure}[htbp]
    \centering
    \begin{tikzpicture}[node distance=0cm, transform shape, scale=0.7]

\definecolor{noisecolor}{HTML}{5c0404} 

% Define all styles using tikzset for better organization and reuse
\tikzset{
    converter/.style={
        rectangle, 
        draw, 
        fill=bleudefrance!20!green!5, 
        align=center, 
        inner ysep=10pt,
        minimum width=4.2cm,
        minimum height=1.7cm,
        execute at begin node=\setlength{\baselineskip}{15pt}
    },
    chemistry/.style={
        rectangle, 
        draw=green!30!black, 
        fill=green!15!white, 
        dashed, 
        align=center, 
        inner ysep=10pt,
        minimum width=4.2cm,
        minimum height=1.2cm,
        execute at begin node=\setlength{\baselineskip}{15pt}
    },
    noise/.style={
        noisecolor
    },
    arrow_style/.style={
        ->, 
        >=Stealth
    },
    system_box/.style={
        rectangle, 
        draw, 
        thick, 
        inner sep=10pt
    }
}

% === NODES ===
% Place the main nodes
% \node (chemical_reaction) [chemistry] {Reaction on chemical \\ substance with profile $\profile$};
\node (chemical_reaction) [chemistry, text width = 4cm, font=\large] {Chemical Substance \\ with profile $\profile$};
% \node (converter) [converter, left=2cm of chemical_reaction.west] {Create chemical \\ reactants};
\node (converter) [converter, left=1.5cm of chemical_reaction.west, font=\large] {Chemical Expression};
\node (measure) [converter, right=1.5cm of chemical_reaction.east, font=\large] {Measurement};

% Input output nodes
\node (challenge) [left=1cm of converter] {\Large Challenge $x$};
\node (response) [right=1cm of measure] {\Large  Response $y$};

% Noise nodes
\node (noise1) [above=1.5cm of converter, noise, font=\large] {Noise};
\node (noise2) [above=1.5cm of chemical_reaction, noise, font=\large] {Noise};
\node (noise3) [above=1.5cm of measure, noise, font=\large] {Noise};

% === SYSTEM BOXES ===
\begin{scope}[on background layer]
    
    \node (cf) [system_box, fill=bleudefrance!20, inner ysep=15pt,
                         fit=(chemical_reaction) (converter) (measure),
                         label={[anchor=south east]south east: $\cf{\profile}$: \acl{cf} with profile  $\profile$ }] {};

\end{scope}

% % === ARROWS ===
% Input output arrows
\draw [arrow_style] (challenge) -- (converter);
\draw [arrow_style] (measure) -- (response);

% Inner arrows
\draw [arrow_style] (converter.east) -- node[above] {$\tilde{x}$} (chemical_reaction.west);
\draw [arrow_style] (chemical_reaction.east) -- node[above] {$\tilde{y}$} (measure.west);

% Noise arrows
\draw [noise, arrow_style] (noise1) -- (converter);
\draw [noise, arrow_style] (noise2) -- (chemical_reaction);
\draw [noise, arrow_style] (noise3) -- (measure);

\end{tikzpicture}
    \caption{Detailed view of the \acf{cf} $\cf{\profile}$. The digital representation of the challenge $x$ is converted to chemical reactants $\tilde{x}$. These reactants are applied to the chemical substance
    % \RB[inline]{Component or substance? Also please check if we need the //} 
    with profile $\profile$. After the reaction takes place, the output $\tilde{y}$ is measured and converted to a digital representation $y$. Note that all three steps are noisy.}
    \label{fig:chemical_function_figure}
\end{figure}
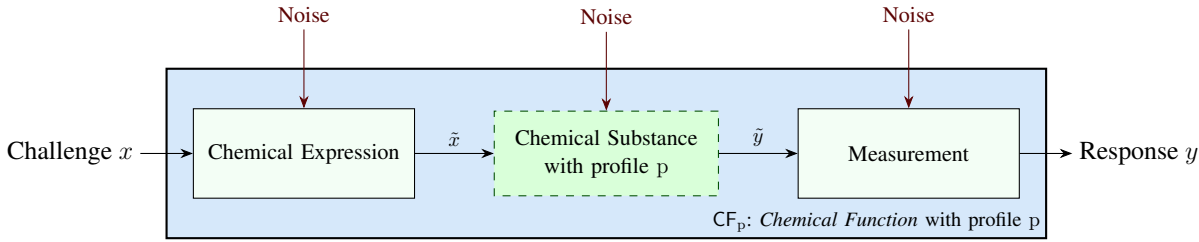

% Before formalizing our framework, we describe informally the working of the scheme. 
As shown in \Cref{fig:chemical_function_figure}, the central component of the \ac{cf} is a chemical substance. The behavior of a chemical substance is characterized by its exact chemical composition, which we refer to as its \emph{profile}, denoted by $\profile$. The profile $\profile$ fully characterizes the chemical composition of the substance and therefore the behavior of the \ac{cf}, and can be either measurable or not.
% that is characterized by its profile $\profile$. 
It is stimulated by an input signal $\tilde{x}$ to obtain a corresponding output signal $\tilde{y}$. This process happens in the chemical domain.
The input signal $\tilde{x}$ can be described by a digital representation $x$ which we refer to as the \emph{challenge}. The output signal $\tilde{y}$ can be measured and converted to a digital representation $y$ which we call the \emph{response}.
% Thus, as the behavior of the substance is akin to a function, which maps an input $x \in \setchal$ to the corresponding output $y \in \setresp$, we refer to it as a chemical function $\cf{}$. The inputs and outputs of a chemical function are referred to as \emph{challenges} and \emph{responses}, respectively. \RB{Last sentence needs polishing} 
With these ideas in mind, we now begin by formally defining the chemical function.
\begin{definition}[Chemical Function]
    A \emph{chemical function} $\cf{}$ is a noisy mapping
    $ \cf{}^{\alpha_{E}} : \setchal \to \setresp $,
    where $\setchal$ is the set of all valid challenges, $\setresp$ is the set of all possible responses, and $\alpha_{E}$ is a fixed evaluation parameter. 
\end{definition}
For the sake of brevity, we omit mentioning $\alpha_{E}$ and denote a chemical function simply as $\cf{}$. The specific instantiation of a chemical function is denoted by $\cf{\profile}$. The profile $\profile$ fully characterizes the chemical composition of the substance and therefore the behavior of the \ac{cf}.

\begin{example}\label{ex:chemical_functions_simple}
    An example of a chemical function is a test strip that changes color based on the pH level of a solution. The chemical substance in the strip has a specific profile $\profile$ that determines how it reacts to different pH levels. The challenge $x$ is the pH level of the solution applied to the strip, and the response $y$ is the color change observed on the strip. The mapping from pH level to color change is noisy due to variations in lighting conditions, observer perception, and manufacturing inconsistencies in the test strips. Thus, this system can be modeled as a chemical function $\cf{\profile}$ that maps challenges (pH levels) to responses (colors) with some inherent noise.
    
    Another example of a chemical function is a mixture of fluorescence dyes that emit different colors when excited by light of a specific wavelength \cite{woidasky_2020_InorganicFluorescent}. The profile $\profile$ would describe the concentrations and types of dyes in the mixture. The challenge $x$ could be the wavelength of the excitation light, and the response $y$ would then be the resulting fluorescence spectrum. In this example, only one challenge would be available if the dyes had only a single excitation wavelength. Again, due to factors like dye interactions and environmental conditions, the response would be noisy.
    Beyond these examples, many digital barcoding systems can be modeled as chemical functions. \cite{yang_2024_DigitalBarcodes}
\end{example}

Aside from the chemical function itself, an infrastructure is required. We refer to this entire infrastructure as the \acf{cfi}, denoted $\cfi$, which is shown in \Cref{fig:chemical_function_infrastructure_figure}.
The chemical function is manufactured through a publicly known generation process $\gen$, which takes as input the security parameter $n$, and is subject to some parameters $\alpha_G$ and innate synthesis variability. 
In contrast to physical functions, chemical functions will be generated in larger quantities such that they can be divided into smaller parts and distributed across space. In the special case of \ac{dna}-based chemical functions, even amplification of the chemical substance can be possible. 

\begin{figure}[htbp]
    \centering
    \begin{tikzpicture}[node distance=0cm, transform shape, scale=0.7]

\definecolor{noisecolor}{HTML}{5c0404} 

% Define all styles using tikzset for better organization and reuse
\tikzset{
    function/.style={
        rectangle, 
        draw, 
        fill=bleudefrance!20, 
        align=center, 
        inner ysep=10pt,
        minimum width=3.5cm,
        minimum height=2.2cm,
        execute at begin node=\setlength{\baselineskip}{15pt}
    },
    component/.style={
        rectangle, 
        draw=green!30!black, 
        fill=green!15!white, 
        dashed, 
        align=center, 
        inner ysep=10pt,
        minimum width=3.5cm,
        minimum height=1.2cm,
        execute at begin node=\setlength{\baselineskip}{15pt}
    },
    arrow_style/.style={
        ->, 
        >=Stealth
    },
    param_arrow/.style={
        ->, 
        >=Stealth, 
        dashed, 
        draw=green!30!black
    },
    label_style/.style={ 
        execute at begin node=\setlength{\baselineskip}{15pt}
    },
    system_box/.style={
        rectangle, 
        draw, 
        thick, 
        inner sep=10pt
    }
}

% === NODES ===
% Place the main nodes first in a logical order (left to right)
\node (chemical_function) [function] {$\cf{\profile}$ \\ Chemical function \\ with profile $\profile$};
\node (gen_function) [function, below=2cm of chemical_function.south, xshift=0cm] {$\gen$ \\ Generation function};
% \node (amplify_function) [function, below=2cm of chemical_function.south, xshift=0cm] {$\amp$ \\ Amplification function};
% \node (seal_function) [function, below=2cm of chemical_function.south, xshift=4cm] {$\seal$ \\ Sealing function};
% \node (chemical_function) [function, above=3cm of chemical_function.north] {$\eval $ \\ Evaluation function};
\node (extract_func) [function, right=3.5cm of chemical_function] {$\extract$ \\ Extraction algorithm};

% Place nodes related to "Gen"
\node (gen_param) [align=center, execute at begin node={\setlength{\baselineskip}{10pt}}, below=0.7cm of gen_function.south] {Generation parameter \\ $\alpha_{\text{G}}$};
\node (production_variability) [noisecolor, align=right, execute at begin node={\setlength{\baselineskip}{10pt}}, left=1cm of gen_function.west] {Synthesis \\ variability};

% Place nodes related to "Amplify"
% \node (amplify_param) [align=center, execute at begin node={\setlength{\baselineskip}{10pt}},below=0.7cm of amplify_function.south] {Amplification parameter \\ $\alpha_{\text{A}}$};

% Place nodes related to "Seal"
% \node (seal_param) [align=center, execute at begin node={\setlength{\baselineskip}{10pt}},below=0.7cm of seal_function.south] {Sealing parameter \\ $\alpha_{\text{S}}$};

% Place nodes related to "CF"
\node (eval_param) [align=center, execute at begin node={\setlength{\baselineskip}{10pt}},above=1.5cm of chemical_function.north] {Evaluation parameter \\ $\alpha_{\text{E}}$};
\node (noise) [noisecolor, left=1.25cm of chemical_function.west, yshift=0.5\baselineskip] {Noise};
\node (challenge_x) [left=3.5cm of chemical_function.west, yshift=-0.5\baselineskip] {Challenge $x$};

% Place nodes related to "Extract"
\node (extract_param) [align=center, execute at begin node={\setlength{\baselineskip}{10pt}},above=1.5cm of extract_func.north] {Extraction parameter \\ $\alpha_{\text{X}}$};
\node (helper_data_h) [align=center, execute at begin node={\setlength{\baselineskip}{10pt}}, anchor=north] at ($(extract_param.north)+(-3cm,0)$) {Helper data \\ $h$};
\node (helper_data_h2) [right=1cm of extract_func.east, yshift=0.5\baselineskip] {Helper data $h'$};
\node (output_z) [right=1cm of extract_func.east, yshift=-0.5\baselineskip] {Output $z$};

% === SYSTEM BOXES ===
\begin{scope}[on background layer]
    
    \node (pfi_box) [system_box, fill=gray!1,
                         fit=(gen_function) (gen_param) (eval_param) (extract_param) (helper_data_h) (production_variability) (helper_data_h2) (output_z),
                         label={[anchor=south east]south east:\textbf{Chemical Function Infrastructure }($\cfi$)}] {};
    \node (cfs_box) [system_box, fill=gray!10, inner ysep=20pt, inner xsep=5pt,
                     fit=(extract_func) (noise) (chemical_function) (chemical_function),
                     label={[anchor=south east]south east:\textbf{Chemical Function System }($\cfs{\profile}$)}] {};

\end{scope}

% === ARROWS ===
% Arrows for the "Gen" part
\draw [arrow_style] (gen_param) -- (gen_function);
\draw [draw=noisecolor, arrow_style] (production_variability.east) -- (gen_function.west);
\draw [param_arrow] (gen_function.north) -- ( chemical_function.south);

% Arrows for the "Amplify" part
% \draw [arrow_style] (amplify_param) -- (amplify_function);
% \draw [param_arrow, <->] (amplify_function.north) -- (chemical_function.south);

% Arrows for the "Seal" part
% \draw [arrow_style] (seal_param) -- (seal_function);
% \draw [param_arrow, <->] (seal_function.north) -- ++(0cm,0.5cm) -| ([xshift=1cm] chemical_function.south);

% Arrows for the "Evaluation" part
\draw [arrow_style] (challenge_x) -- (challenge_x.east -| chemical_function.west);
\draw [noisecolor, arrow_style] (noise.east) -- (noise.east -| chemical_function.west);
\draw [arrow_style] (eval_param) -- (chemical_function);
\draw [arrow_style] ([yshift=-0.5\baselineskip] chemical_function.east) -- node[above] {Response $y$} ([yshift=-0.5\baselineskip] extract_func.west);

% Arrows for the "Extraction" part
\draw [arrow_style] (extract_param) -- (extract_func);
\draw [arrow_style] (helper_data_h) |- ([yshift=0.5\baselineskip] extract_func.west);
\draw [arrow_style] ([yshift=0.5\baselineskip] extract_func.east) -- (helper_data_h2);
\draw [arrow_style] ([yshift=-0.5\baselineskip] extract_func.east) -- (output_z);

\end{tikzpicture}
    \caption{Architecture of the \acf{cfi}, illustrating the three main layers: \acl{cfi} $\cfi$, \acl{cfs} $\cfs{}$, and \acl{cf} $\cf{}$. Each layer contains specific processes and parameters contributing to the overall authentication mechanism.}
    \label{fig:chemical_function_infrastructure_figure}
\end{figure}

Due to ambient noise in the system, the response generated by the chemical function is prone to errors and thus non-deterministic. To tackle this, \acp{cf} are usually coupled with an appropriate extraction algorithm, $\extract$, which processes the response $y$ and maps it to an output $z$ similar to \cite{dodis_2004_FuzzyExtractors}. 
 
The extraction algorithm relies on side information, which we refer to as the \emph{helper data}, indicating whether a challenge has been observed before. 

\begin{definition}[Extraction Algorithm]\label{def:extraction_algorithm}
    An \emph{extraction algorithm} is a mapping
    \begin{align*}
        \extract^{\para{X}} : \setresp \times \sethelper \to \setext \times \sethelper,
    \end{align*}
    where $\setresp$ is the set of responses, $\sethelper$ is the set of possible helper data, $\setext$ is the set of all possible extracted responses, and ${\para{X}}$ denotes the extraction parameters. 
    When a challenge $x \in \setchal$ is requested for the first time, the extraction is done in \emph{setup} mode with empty helper data $ \emptyset$, and the algorithm outputs the reconstructed response $z \in \setext$ along with helper data $h \in \sethelper$. Later, if the same challenge $x$ is requested again, the extraction algorithm uses the previously generated helper data $h$ to reconstruct $z$. The utilization of helper data in this manner enables a more robust mapping from the challenge to the chemical function, as defined in \Cref{sec:properties}.
\end{definition}

Hence, the extraction algorithm takes as input the noisy response $y$ and some helper data $h$ and outputs the extracted response $z$ along with some helper data $h'$. If the input helper data is not empty, the algorithm outputs the same helper data again.
In a coding theoretic sense, the extraction algorithm can be viewed as a mapping from the response space $\setresp$ to a set of Voronoi regions in $\setext$ defined by the helper data. The extraction algorithm is designed to map noisy responses that fall within the same Voronoi region to the same extracted response $z$. 

As the extraction algorithm is handling the noise in the response, it needs to be adapted to the noise level of the chemical function. For example, if different devices are used for measurement, the noise characteristics may differ significantly and the extraction algorithm needs to be adjusted accordingly. This helps to maintain a certain robustness of the system as defined in \Cref{sec:properties} but comes at the cost of reducing the number of possible extracted responses as the Voronoi regions need to be larger to accommodate the increased noise.

It is convenient for the rest of the paper to view the chemical function and the extraction algorithm as a single system. We refer to this combined system as the \ac{cfs}. Furthermore, for a chemical function $\cf{\profile}$, we denote the corresponding chemical function system as $\cfs{\profile}$.

\begin{definition}[Chemical Function System]
    A \acl{cfs} $\cfs{\profile}$ is a probabilistic mapping obtained by concatenating a \ac{cf} and an extraction algorithm, defined as
    \begin{align*}
        \cfs{\profile} : \setchal \times \sethelper  &\to \setext \times \sethelper \\
        (x, h) &\mapsto (z, h') 
    \end{align*}
    where $\setchal$ is the set of challenges, $\sethelper$ is the set of possible helper data and $\setext$ is the set of all possible outputs.
\end{definition}
Due to the embedded extraction algorithm, the \ac{cfs} can be in \emph{setup} or \emph{reconstruction} mode analogous to \Cref{def:extraction_algorithm}. The function also comprises evaluation parameter $\alpha_E$ and extraction parameter $\para{X}$ that are fixed for a given system and are, once again, omitted.

Note that even though the extraction algorithm helps mitigate noise in the response, the output $z$ is still prone to errors due to the inherent stochasticity of chemical processes and measurement. We can now apply these definitions to the example of fluorescence dyes introduced in Example \ref{ex:chemical_functions_simple}.

\begin{example}
    \label{ex:dye_extract}
    Let us consider a chemical function based on a mixture of up to $7$ fluorescence dyes that emit different colors when excited by light of a specific wavelength.  
    The response $y$ represents the information of the presence or absence of a specific color. This gives us a $7$-bit vector. Through some noise in the measurement process, one color might be missed or an additional color might be detected. 
    The extraction algorithm then may be a $(7,4,3)$ Hamming code that maps the $7$-bit vector to a $4$-bit extracted response $z$. The helper data can be empty in this case. It will be used if the Voronoi region cannot be achieved through a perfect code. Using the Hamming code as extraction algorithm, if one color is missed or an additional color is detected, the extraction algorithm can still recover the same $4$-bit response $z$ for subsequent experiments.
\end{example}

We now consider the \ac{ordna} scheme which requires a more sophisticated extraction algorithm.
\begin{example}
    An example of a chemical function is the \ac{ordna} scheme presented in \cite{luescher_2024_ChemicalUnclonablea}, which we briefly introduce here in an abstracted form. In this scheme, the chemical substance is a pool of random \ac{dna} sequences generated using the $\gen$ process. The profile $\profile$ of the chemical function $\cf{\profile}$ is the specific set of random \ac{dna} sequences present in the pool. The structure of the constant and random parts of the \ac{dna} molecules is illustrated in \Cref{fig:ordna_design}.
    \begin{figure}[h]
        \centering
        \begin{tikzpicture}[node distance=0cm, transform shape, scale=0.7]
    % Define styles
    \tikzset{
        constant/.style={draw, fill=black, minimum height=0.5cm, outer sep=0, text=white},
        random/.style={draw, fill=lightgray, minimum height=0.5cm, outer sep=0, text=black},
        label/.style={above, midway, yshift=0.2cm},
        brace/.style={decorate, decoration={brace, amplitude=5pt}}
    }

    % DNA strand parts (widths are proportional to nt length)
    \node[constant, minimum width=2.0cm] (handle1) {};
    \node[random,   minimum width=2.0cm, right=of handle1] (input1) {};
    \node[constant, minimum width=2.0cm, right=of input1] (adapter1) {};
    \node[random,   minimum width=2.0cm, right=of adapter1] (output) {};
    \node[constant, minimum width=2.0cm, right=of output] (adapter2) {};
    \node[random,   minimum width=2.0cm, right=of adapter2] (input2) {};
    \node[constant, minimum width=2.0cm, right=of input2] (handle2) {};

    % Labels for parts
    \def \YSHIFT{0.1cm}
    \def \XSHIFT{0.05cm}
    \draw[brace] ([yshift=\YSHIFT, xshift=\XSHIFT]handle1.north west) -- ([yshift=\YSHIFT, xshift=-\XSHIFT]handle1.north east) node[label, align=center] {Handle I\\(20 nt)};
    \draw[brace] ([yshift=\YSHIFT, xshift=\XSHIFT]input1.north west) -- ([yshift=\YSHIFT, xshift=-\XSHIFT]input1.north east) node[label, align=center] {Input I\\(9 nt)};
    \draw[brace] ([yshift=\YSHIFT, xshift=\XSHIFT]adapter1.north west) -- ([yshift=\YSHIFT, xshift=-\XSHIFT]adapter1.north east) node[label, align=center] {Adapter I\\(20 nt)};
    \draw[brace] ([yshift=\YSHIFT, xshift=\XSHIFT]output.north west) -- ([yshift=\YSHIFT, xshift=-\XSHIFT]output.north east) node[label, align=center] {Output\\(21 nt)};
    \draw[brace] ([yshift=\YSHIFT, xshift=\XSHIFT]adapter2.north west) -- ([yshift=\YSHIFT, xshift=-\XSHIFT]adapter2.north east) node[label, align=center] {Adapter II\\(21 nt)};
    \draw[brace] ([yshift=\YSHIFT, xshift=\XSHIFT]input2.north west) -- ([yshift=\YSHIFT, xshift=-\XSHIFT]input2.north east) node[label, align=center] {Input II\\(10 nt)};
    \draw[brace] ([yshift=\YSHIFT]handle2.north west) -- ([yshift=\YSHIFT]handle2.north east) node[label, align=center] {Handle II\\(20 nt)};

    % Selection PCR primer binding sites
    \draw[red, thick, |-|] ([yshift=-0.3cm, xshift=-0.7cm]input1.south west) -- ([yshift=-0.3cm, xshift=-0.1cm]input1.south east);

    \draw[red, thick, |-|, yshift=-0.6cm] ([yshift=-0.3cm, xshift=0.1cm]input2.south west) -- ([yshift=-0.3cm, xshift=0.7cm]input2.south east);

    % Blunting lines
    \draw[green!50!black, dash pattern= on 3pt off 5pt, dash phase=4pt, thick] ([xshift=0.27cm, yshift=0.1cm]handle1.north) -- ++(0,-1.2cm);
    \draw[green, dash pattern= on 3pt off 5pt, thick] ([xshift=0.27cm, yshift=0.1cm]handle1.north) -- ++(0,-1.2cm);
    \draw[green!50!black, dash pattern= on 3pt off 5pt, dash phase=4pt, thick] ([xshift=-0.27cm, yshift=0.1cm]handle2.north) -- ++(0,-1.2cm);
    \draw[green, dash pattern= on 3pt off 5pt, thick] ([xshift=-0.27cm, yshift=0.1cm]handle2.north) -- ++(0,-1.2cm);
    % \draw[thick] (handle2.north east) -- ++(0,0.5cm);

    % Legend
    \begin{scope}[xshift=-0.75cm, yshift=-1.5cm]
        \node[constant, minimum height=0.25cm, minimum width=0.5cm] (constant) {};
        \node[right, right=of constant] {Constant/determined part};
        \node[random, below=0.2cm of constant, minimum height=0.25cm, minimum width=0.5cm] (random) {};
        \node[right, right=of random] {Randomly synthesized part};
        \node[below=0.2cm of random, minimum height=0.25cm, minimum width=0.5cm] (cut) {};
        \draw[green!50!black, dash pattern= on 3pt off 5pt, dash phase=4pt, thick] (cut.west) -- (cut.east);
        \draw[green, dash pattern= on 3pt off 5pt, thick] (cut.west) -- (cut.east);
        \node[right, right=of cut] {Blunting site (preventing amplification)};
        \node[below=0.2cm of cut, minimum height=0.25cm, minimum width=0.5cm] (selection) {};
        \draw[red, thick, |-|] (selection.west) -- (selection.east);
        \node[right, right=of selection] {Selection PCR primer binding site};
    \end{scope}
\end{tikzpicture}
        \caption{The structure of an \ac{ordna} molecule. The outer handles can be used to amplify the sequences. The outer handles are blunted to prevent further amplification. The random inputs correspond to the challenges, as the selection \ac{pcr} binds with these to perform the amplification process on these strands only. The random output part of multiple reads corresponds to the responses of the \ac{cf}. Image replicated from \cite{luescher_2024_ChemicalUnclonablea}.}
        \label{fig:ordna_design}
    \end{figure}
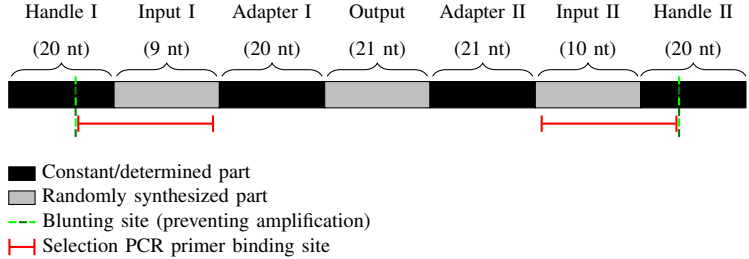
    The challenge $x$ to the chemical function consists of a set of \ac{pcr} primers, which amplify a small subset of sequences from the pool defined by the random input part.
    When this challenge is applied, only the strands whose random input section match are amplified exponentially. Therefore, these strands will make up the vast majority of the pool. If the pool is sequenced, with high probability one of these strands will be read. 
    The response $y$ is the set of random output parts of the amplified sequences. Due to the stochastic nature of \ac{pcr} and sequencing, the response $y$ is noisy. To obtain a robust output, an appropriate extraction algorithm $\extract$ is applied to $y$, which outputs $z$ along with some helper data $h$. 
\end{example}

%~~~~~~~~~~~~~~~~~~~~~~~~~~~~~~~~~~~~~~~~~~~~~~~~~~
%   Section: Properties of Chemical Functions
\section{Security Properties for Chemical Functions}\label{sec:properties}
Now that we have defined the chemical functions framework, we can define properties that a chemical function should satisfy to be suitable for security applications. We adapt the properties defined for \acp{puf} in \cite{armknecht_2011_FormalizationSecurity,ruhrmair_2009_FoundationsPhysical,ruhrmair_2010_ModelingAttacks} to our setting. The three main properties are: 
\begin{inparaenum}
    \item[i)] robustness, 
    \item[ii)] unclonability, and 
    \item[iii)] unpredictability.
\end{inparaenum}  
In contrast to \cite{armknecht_2011_FormalizationSecurity,ruhrmair_2009_FoundationsPhysical}, we will define these properties in a finite domain as well as in an asymptotic manner.

\subsection{Robustness}
Similar to \acp{puf}, chemical processes are noisy, meaning that the same challenge may not yield the exact same raw response on every evaluation. Therefore, an extraction algorithm $\extract$, as in \Cref{def:extraction_algorithm}, is employed to distill a stable output from noisy measurements.
Informally, robustness is the probability that repeated evaluations of the same \ac{cf} on the same challenge agree after extraction. Formally, we follow \cite{armknecht_2011_FormalizationSecurity} as the general setting is applicable here.

\begin{definition}[Robustness]\label{def:robustness}
Let $\cfs{\profile}$ be a \acl{cfs} with profile $\profile$ and let $x \in \setchal$ be a challenge. The \emph{challenge robustness} of $\cfs{\profile}$ with respect to $x$ is defined as the probability
\begin{align*}
    \rho (\cfs{\profile}, x) = \prob{ \cfs{\profile} (x,h) = (z,h) | \cfs{\profile} (x,\emptyset) = (z, h)}.
\end{align*}
\end{definition}

A first (setup) evaluation on $x$ yields $(z,h)$; a later (reconstruction) evaluation on the same $x$ using $h$ should reproduce the same $z$. The quantity $\rho(\cfs{\profile}, x)$ is exactly this agreement probability, absorbing stochastic variability in the chemistry and any randomized steps in extraction.

\begin{example}
    Continuing the simplified example of fluorescence dyes from \Cref{ex:dye_extract}, let us assume that the measurement process has a chance of missing a color or falsely detecting an additional color with probability $p$. Thus, when measuring the same $\cfs{\profile}$ multiple times with the same challenge $x$, the chance that any $k$ of the $7$ bits in the response $y$ are flipped follows a binomial distribution with parameters $7$ and $p$. Using a $(7,4,3)$ Hamming code as extraction algorithm, up to $1$ bit error can be corrected. The challenge's robustness $\rho(\cfs{\profile}, x)$ is then given as the probability of having at most $1$ bit error in the response which can be calculated as 
    \begin{align*}
        \rho (\cfs{\profile}, x) = \sum_{k=0}^{1} \binom{7}{k} p^k (1-p)^{7-k}
        = (1-p)^7 + 7p(1-p)^6.
    \end{align*}
\end{example}

If we have a \ac{cf} with multiple challenges, we also want to define robustness over a set of challenges. This can be done both for a single \ac{cfs} instance as well as for a whole \ac{cfi} when considering also all possible instantiations of the \ac{cfs}.

\begin{definition}[Robustness of a \ac{cfs}]\label{def:robustness_cfs}
    Let $X$ be a random variable with realizations $x \in \setchal$.
    The minimum and average robustness of a \acl{cfs} $\cfs{\profile}$ with respect to the set of challenges $\setchal$ are defined as
    \begin{align*}
        \rho_{\min}(\cfs{\profile}, \setchal) &= \min_{x \in \setchal} \left\{ \rho (\cfs{\profile}, x) \right\},\\
        \rho_{\text{avg}}(\cfs{\profile}, \setchal) 
        &= \sum_{x\in\setchal} \prob{X = x}\rho(\cfs{\profile},x).
    \end{align*}
\end{definition}

In this definition, $\rho_{\min}$ provides a worst-case guarantee over the chosen challenge set, while $\rho_{\text{avg}}$ summarizes a typical behavior under a specified challenge distribution. Which one is relevant depends on the specific \ac{cfi} and the purpose it is used for.

\begin{definition}[Robustness of a \ac{cfi}]\label{def:robustness_cfi}
    Consider a \acl{cfi} $\cfi$ and let $\cfs{\profile}$ be a \acl{cfs} instance of $\cfi$ with profile $\profile \in \setprofile$ using the same creation, evaluation and extraction parameters. With $\setprofile$ we denote the set of all valid profiles as defined by the manufacturer. Given a random variable $P$ with realizations $\profile \in \setprofile$, we define the minimum and average robustness of a \ac{cfi} as
    \begin{align*}
        \rho_{\min}(\cfi, \setchal) &= \min_{\profile \in \setprofile} \left\{  \rho_{\min}(\cfs{\profile}, \setchal) \right\},\\
        \rho_{\text{avg}}(\cfi, \setchal) &= \sum_{\profile \in \setprofile} \prob{ P = p } \rho_{\text{avg}}(\cfs{\profile}, \setchal).
    \end{align*}
\end{definition}
This lifts robustness from a single instance to an implementation family. The minimum over $\setprofile$ captures the least reliable valid profile, whereas the average reflects expected behavior across the manufacturer's admissible profiles. 

With these notions in place, we next instantiate them on the \ac{ordna} construction to quantify empirical reliability under realistic experimental noise and extraction parameters. We continue the example from \cite{luescher_2024_ChemicalUnclonablea} to illustrate the concept of robustness. For simplicity, we will only analyze the average robustness of the \ac{cfi}. 

\begin{example} \label{ex:robustness}
    The extraction algorithm presented in \cite{luescher_2024_ChemicalUnclonablea} proceeds as follows. First, a filtering step removes artifacts and potential contamination from the raw reads obtained from applying the challenge $x$. Second, $k$-mer extraction is performed with $k=8$ according to \cite{liu_2022_CMashFast} on the middle part of the molecule labeled as \emph{Output} in \Cref{fig:ordna_design}. Third, the MinHash algorithm from \cite{broder_1998_ResemblanceContainment,haveliwala_2000_ScalableTechniques} is applied $30$ times to reduce the dimensionality of the data and obtain $w_i \in \mathbb{F}_{2^8}^{255}$ for $i \in \{1, ..., 30\}$. Finally a fuzzy vault scheme is applied \cite{juels_2006_FuzzyVault}: The resulting data is added to a random codeword of a Reed-Solomon code of length $n=255$ and redundancy $r=223$ to obtain the helper data $h_i = c_i + w_i$. The code's minimum Hamming distance is $d=224$, allowing for the guaranteed correction of up to $t = \lfloor (d-1)/2 \rfloor = 111$ byte errors. This corresponds to a noise level of up to $43.5\%$ in the MinHash data that can be corrected. After error correction, the ratio of successful decodings of the $30$ runs is computed and then mapped to a binary value via a thresholding method. To check against other samples, the difference of the helper data and the MinHash results $w_i'$ of that other sample is decoded ($ c_i' = h_i - w_i' $).

    For simplicity, we assume that a single fixed MinHash function is used and that the byte errors are independent and identically distributed. Furthermore, we assume that all experiments are done on the same device using the same protocol and that the byte errors are independent of the profile $\profile \in \setprofile$, the challenge $x \in \setchal$, and the pool size. While the actual noise model is more complex and probably not i.i.d. the binomial model is a sufficient approximation to determine the results for the robustness. Based on the published data and code \cite{luescher_2024_ChemicalUnclonablea}, the mean byte error rate after filtering and MinHash application is approximately $17\%$.

    Let $X$ be the random variable representing the number of byte errors in a $255$-byte string. We model $X$ using a binomial distribution with parameters $n=255$ and $p_{\text{noise}}=0.17$. The average robustness of the $\cfi$ is then given as the probability of a decoding failure, i.e., the event of encountering more than $111$ byte errors, that is calculated as
    \begin{align*}
        \rho_{\text{avg}}(\cfi, \setchal) 
        = \prob{X \leq 111} 
        %\\
        = \sum_{k=0}^{111} \binom{255}{k} (0.17)^k (0.83)^{255-k} 
        %\\
        \approx 1 - 1.7 \times 10^{-15}.
    \end{align*}
    We note that, as indicated above, all profiles generated under \cite{luescher_2024_ChemicalUnclonablea} have a similar behavior. Namely, $\rho_{\min}(\cfs{\profile}, \setchal)$ is constant for all $\profile \in \setprofile$ and can be moved out of the sum.
    The set of challenges $\setchal$ considered in this example, inferring $p_{\text{noise}} = 0.17$, consists of the challenges that are analyzed in \cite{luescher_2024_ChemicalUnclonablea}. However, we can assume that the robustness will be similar for other challenges as well.
\end{example}

\subsection{Unclonability}\label{sec:unclonability}
For electronic \acp{puf}, the notion of unclonability is typically phrased as the infeasibility of fabricating a physical duplicate of the device. In the chemical setting, the relevant barrier instead concerns reconstructing by chemical or analytical means some profile that governs the \acp{cf} responses under admissible challenges.
Thus, unclonability requires that no efficient attack, even after interacting with an authentic \ac{cf}, can engineer a counterfeit \ac{cf} whose challenge--response behavior is (beyond negligible probability) indistinguishable from that of the genuine pool. 

Adhering to Kerckhoffs' principle, we assume that manufacturing details as well as the generation, evaluation, and extraction parameters $\alpha_{G}, \alpha_{E}, \alpha_{X}$ are public. An adversary's objective is therefore to recover (or emulate) the profile $\profile$ associated with the authentic chemical function $\cf{\profile}$, which plays an analogous role to a cryptographic key. It is realistic to assume that an adversary $\adversary$ can obtain one or more instances of the protected product and elicit a polynomial number of function evaluations to inform a cloning attempt.
At a high level, three generic attack methods are available:
\begin{enumerate}
    \item \textbf{Dilution:} In this approach, an adversary $\adversary$ obtains samples of the original product containing the authentic $\cf{\profile}$ and reuses them. 
    Due to the nature of a \ac{cfi} and the initial amplifying step, $\adversary$ can reuse the original $\cf{\profile}$ in a diluted form to produce responses to new challenges. A potential countermeasure is to specify the expected concentration of DNA in the genuine product, making significant dilution detectable.
    \item \textbf{Chemical Amplification:} In case the \ac{cf} is made of \ac{dna}, through techniques like \ac{pcr}, an attacker could amplify the \ac{cf} directly in a laboratory, exploiting \ac{dna}’s natural ability to replicate without exposing its full content. This enables the creation of counterfeit \acp{cf} via direct cloning of the original sequences.
    \item \textbf{Measure and Recreate:} An adversary sequences $\cf{\profile}$ to obtain knowledge about the profile $\profile$ and then synthesizes $\cf{\profile'}$ that provides the same challenge--response behaviour. Mitigation includes designing \acp{cf} that are exceptionally difficult (or currently impractical) to sequence or that degrade under standard preparation protocols, hindering attempts to replicate or analyze them.
\end{enumerate}

In this work, we focus on the third attack vector. The feasibility of chemical amplification depends on specific laboratory techniques and assumptions on their limitation. 
In the case of \ac{dna}-based chemical functions, any means to amplify them must be prevented for the distributed \ac{cf}. This has to be done by chemical means and must rely on assumptions as in \Cref{ex:ordna_chemically_unclonable}. Therefore, we do not include this attack in the mathematical analysis of this work and assume that a suitable process is in place.

\begin{example}\label{ex:ordna_chemically_unclonable}
    In \cite{luescher_2024_ChemicalUnclonablea}, the prevention of amplification works as follows: The constant handles are cleaved up to the point where the selection \ac{pcr} primers bind and the 5'-overhangs are blunted. The proof that this prevents any further amplification is done in \cite{luescher_2024_ChemicalUnclonablea} and is beyond the scope of this work.
\end{example}

When there are trusted methods in place to prevent any chemical amplification and when dilution is assumed ineffective, an adversary is left with sequencing and synthesis attacks. The cost required for measurements and recreations can be determined analytically. 
Instead of providing a detailed cost analysis, we focus on defining unclonability in a cryptographic sense. Therefore, we bound the effort of the adversary and the effort in evaluating the \ac{cf} via the security parameter $n$ as in \Cref{subsec:authentication}. 
In contrast to \cite{ruhrmair_2009_FoundationsPhysical,armknecht_2011_FormalizationSecurity}, we analyze the chemical function in a finite domain as well as in an asymptotic regime, as it is relevant to see the unclonability of the scheme in an evolving context.
We define the adversarial model through a \emph{security game} which specifies the adversary's capabilities and the conditions for a successful \emph{break} in the spirit of \cite{katz_2020_IntroductionModern}.

\begin{definition}[Chosen Challenge Attack Cloning Game]\label{def:cloning_game}
    For a \acl{cfs} $\cfs{}$ and an adversary $\adversary$, the \emph{chosen challenge attack cloning security game} $\attack_{\adversary, \cfs{}}(n)$ is defined as follows.
    \begin{itemize}
        \item An authentic $\cfs{\profile}$ with profile $\profile \in \setprofile$ is generated via $\gen(n)$.
        \item The adversary $\adversary$ is given oracle access to $\cf{\profile}$ of $\cfs{\profile}$ and may adaptively query challenges $x_1, \ldots, x_q$ of their choice, receiving corresponding responses $y_1, \ldots, y_q$ and outputs $z_1, \ldots, z_q$. Let $\mathcal{Q}_{\adversary} = \{x_1, \ldots, x_q\}$ denote the set of queried challenges.
        \item The adversary produces a chemical function $\cf{\profile'}$ with profile $\profile' \in \setprofile$ such that the corresponding \acl{cfs} $\cfs{\profile'}$ utilizes the same extraction algorithm as $\cfs{\profile}$. The adversary succeeds, if for a randomly chosen fresh challenge $x \notin \mathcal{Q}_{\adversary}$ it holds that $\cfs{\profile'}(x,h) = \cfs{\profile}(x,h)$, where the helper data $h$ is obtained through $(z,h)=\cfs{\profile}(x, \emptyset)$. In this case, the outcome of the game is defined as $1$.
    \end{itemize}
\end{definition}

Note that the entire extraction algorithm is public. Therefore, an adversary $\adversary$ can also obtain the outputs $z_i$ and their corresponding helper data $h_i$ for each query by only having oracle access to $\cf{\profile}$.
This game captures the intuitive goal of an adversary: to create a counterfeit item that behaves identically to an authentic one for a new, previously untested challenge. The oracle access models an adversary who can purchase an authentic product and subject it to a polynomial number of tests to learn its behavior. Remember that we restrict the adversary to polynomial effort in $n$. This also applies to the third step of producing the counterfeit $\cf{\profile'}$.

As the adversary might be able to analyze the \ac{cf} with different methods, it is suitable to define another variant of the cloning game where the adversary is not restricted to querying challenges but can perform any evaluation of equivalent effort. We call this variant the \emph{open cloning game}, defined as follows.

\begin{definition}[Open Cloning Game]\label{def:open_cloning_game}
    For a \acl{cfs} $\cfs{}$ and an adversary $\adversary$, the \emph{open cloning security game} $\attack_{\adversary, \cfs{}}(n)$ is defined as follows.
    \begin{itemize}
        \item An authentic $\cfs{\profile}$ with profile $\profile \in \setprofile$ is generated via $\gen(n)$.
        \item The adversary $\adversary$ can perform $q$ operations with the complexity of evaluating the \ac{cf} on $\cf{\profile}$ to obtain information.
        \item The adversary produces a chemical function $\cf{\profile'}$ with profile $\profile' \in \setprofile$ such that the corresponding \acl{cfs} $\cfs{\profile'}$ utilizes the same extraction algorithm as $\cfs{\profile}$. The adversary succeeds, if for a randomly chosen fresh challenge $x \notin \mathcal{Q}_{\adversary}$ it holds that $\cfs{\profile'}(x,h) = \cfs{\profile}(x,h)$, where the helper data is obtained through $(x,h)=\cfs{\profile}(x, \emptyset)$. In this case, the outcome of the game is defined as $1$.
    \end{itemize}
\end{definition}

Now we can define unclonability in both an asymptotic and finite regime. To avoid unnecessary repetition, we only present the definitions for the chosen challenge attack cloning game of \Cref{def:cloning_game}. The definitions for the open cloning game of \Cref{def:open_cloning_game} are analogous.

\subsubsection{Finite Domain Unclonability}

We now define unclonability for the \ac{cfs} with profile $\profile$ in terms of an adversary's probability in succeeding in the cloning game after querying at most $q$ challenges. This captures the practical scenario where an adversary has limited interaction with the authentic system and must produce a convincing clone.

\begin{definition}[Finite Domain Unclonability]\label{def:finite_domain_unclonability}
    Let $\cfs{\profile}$ be a \acl{cfs} with profile $\profile \in \setprofile$ using the security parameter $n$.
    Let an adversary $\adversary$ play the cloning game of \Cref{def:cloning_game} on $\cfs{\profile}$, issuing at most $q$ (adaptive) challenge queries and thus obtaining $q$ responses.  
    We say that $\cfs{\profile}$ is $(\tau,q)$-\emph{unclonable under chosen challenge attack} if for all such adversaries it holds that
    \begin{align*}
        \prob{\attack_{\adversary, \cfs{\profile}}(n) = 1} \leq \tau.
    \end{align*}
\end{definition}
Note that the cloning game gives an adversary $\adversary$ the power of obtaining some challenge response pairs of $\cfs{\profile}$. The parameters $\tau$ and $q$ are related. For a fixed $\tau$, the manufacturer seeks the maximum number of queries $q$ in order to guarantee that the success probability stays below the threshold $\tau$. Or vice versa, for a fixed number of queries $q$, we can determine the minimum $\tau$ that can be achieved by a protocol used by the manufacturer.

\begin{example}
    We continue the example from \cite{luescher_2024_ChemicalUnclonablea} to illustrate finite-domain unclonability.
    A trivial extreme case is that an adversary can obtain responses for all challenges, i.e., $q = |\setchal|$. In this case, the adversary can fabricate a clone that reproduces the authentic \ac{cf} on every challenge. The probability of success for $\adversary$ is then bounded by the robustness of the \ac{cfs}, i.e., $\tau \leq \rho_{\min}(\cfs{\profile}, \setchal)$. This holds for any \ac{cfs}; in particular, any $\cfs{\profile}$ is $(\rho_{\min}(\cfs{\profile}, \setchal), |\setchal|)$-unclonable.
    To bound $\tau$, we formalize that adversarial success reduces to a Reed–Solomon decoding event. As the vectors are obtained through a Minhash algorithm, the expected distance is modeled as a binomial random variable with success parameter equal to the (weighted) Jaccard similarity of the underlying $k$-mer multisets. We upper-bound this similarity by exploiting permutation symmetry and collision counting over supports. 

    As mentioned, the MinHash algorithm used in \cite{luescher_2024_ChemicalUnclonablea} approximates the weighted Jaccard similarity \cite{liu_2022_CMashFast}. Consequently, the probability that two vectors $y$ and $y'$ collide under MinHash is an unbiased estimate of the (weighted) Jaccard similarity of the corresponding $k$-mer sets \cite{liu_2022_CMashFast}. Let $K$ and $K'$ denote the indicator vectors of the $k$-mer multisets obtained from $\cf{\profile}(x)$ and $\cf{\profile'}(x)$, respectively. Then we have $K, K' \in \left[ 0, m \right]^{4^k} $ with $L_1$-norm $|| K' ||_1 = ||K||_1 = \sum_{i=1}^{4^k} K_i = m$ and $\sum_{i=1}^{4^k} K'_i = |K'| = m$. Let us denote by $\jaccard{K, K'} = \frac{\sum_{i=1}^{4^k} \min(K_i, K_i')}{\sum_{i=1}^{4^k} \max(K_i, K_i')}$ their weighted Jaccard similarity. Define the random variable $W$ as the number of coordinates where $y$ and $y'$ agree, i.e., $W = n - \dist{y, y'}$. Then, we observe that
    \begin{align}
        \tau &= \prob{\attack_{\adversary, \cfs{\profile}}(n) = 1} \nonumber
        \\
        &= \prob{\cfs{\profile'}(x,h) = \cfs{\profile}(x,h)} \nonumber
        \\
        &= \prob{z' = z } \nonumber
        \\
        &= \prob{\dist{y',y} \leq t } \nonumber
        \\
        &= \prob{W \geq n-t}, \label{eq:tau_W}
    \end{align}
    where $t$ is the error-correction capability of the Reed–Solomon code of length $n$ used in the extraction algorithm. By definition, the random variable $W$ follows a binomial distribution with parameters $n=255$ and $p_e = \E_{\pi \in S_{4^k}}\left[\jaccard{K, K'}\right]$.
    Since the $k$-mers are sampled from a DNA sequence according to an underlying distribution $\mathbf{p} = (p_1, \ldots, p_{4^k})$, we have that $K = (K_1, \ldots, K_{4^k})$ is a random variable following a multinomial distribution with parameters $m$ and $\mathbf{p}$. The same holds for $K' = (K'_1, \ldots, K'_{4^k})$ with distribution $\mathbf{p}'$. Assume $\mathbf{p}'$ is a permutation of $\mathbf{p}$, i.e., $\mathbf{p}' = (p_{\pi(1)}, \ldots, p_{\pi(4^k)})$ for some permutation $\pi \in S_{4^k}$.

    A standard upper bound relates the weighted Jaccard similarity to the Dice–Sørensen coefficient. Thus,
    \begin{align*}
        p_e &= \E_{\pi \in S_{4^k}}\left[\jaccard{K, K'}\right]
        \\
        &\leq \E_{\pi \in S_{4^k}}\left[\frac{2 \sum_{i=1}^{4^k} \min(K_i, K_i')}{\sum_{i=1}^{4^k} K_i + \sum_{i=1}^{4^k} K_i'}\right]
        \\
        &= \E_{\pi \in S_{4^k}}\left[\frac{2 \sum_{i=1}^{4^k} \min(K_i, K_i')}{2m}\right]
        \\
        &= \frac{1}{m} \E_{\pi \in S_{4^k}}\left[\sum_{i=1}^{4^k} \min(K_i, K_i')\right].
    \end{align*}

    It is shown in Appendix \ref{sec:proof_schur_concavity} that $\E_{\pi \in S_{4^k}}\left[\sum_{i=1}^{4^k} \min(K_i, K_i')\right]$ is Schur-concave in $\mathbf{p}$ (see \Cref{the:schur_concavity}). Hence, the maximum is achieved when $\mathbf{p}$ is uniformly distributed over its support \cite{marshall_2011_InequalitiesTheory}. Let $\tilde{K}, \tilde{K'}$ denote the corresponding uniform random variables over a support of size $s$ being the maximum number of unique $k$-mers that can appear in a challenge. Then, we have
    \begin{align*}
        p_e & \leq \frac{1}{m}  \E_{\pi \in S_{4^k}}\left[\sum_{i=1}^{4^k} \min(K_i, K_i')\right]
        \\
        & \leq \frac{1}{m}  \E_{\pi \in S_{4^k}}\left[\sum_{i=1}^{4^k} \min(\tilde{K}_i, \tilde{K'}_i)\right].
    \end{align*}
    Under this simplification, $\min(\tilde{K}_i, \tilde{K'}_i)$ equals $1/s$ when $i$ lies in the intersection of the two supports (a collision) and $0$ otherwise. The expected value therefore reduces to the expected number of collisions under a uniformly random permutation $\pi$. For support size $s$, the probability of exactly $j$ collisions is
    \begin{align*}
        \prob{j \text{ collisions}} = \frac{\binom{4^k}{s} \binom{s}{j} \binom{4^k - s - j}{\,s-j\,}}{\binom{4^k}{s}^2}
        = \frac{\binom{s}{j} \binom{4^k - s - j}{\,s-j\,}}{\binom{4^k}{s}},
    \end{align*}
    whence
    \begin{align*}
        p_e &\leq \E_{\pi \in S_{4^k}}\left[\sum_{i=1}^{4^k} \min(\tilde{K}_i, \tilde{K'}_i)\right] 
        \\
        &= \frac{1}{s} \sum_{j=0}^{s} j \prob{j \text{ collisions}} 
        \\
        &= \frac{1}{s} \sum_{j=0}^{s} j \frac{ \binom{s}{j} \binom{4^k - s - j}{s-j}}{\binom{4^k}{s}}.
    \end{align*}
    In \cite{luescher_2024_ChemicalUnclonablea}, the number of unique $k$-mers per readout for $k=8$ satisfies $s < 200$. Hence, we obtain the following upper bound for $p_e$
    \begin{align*}
        p_e < \frac{1}{200} \sum_{i=0}^{200} i \, \frac{\binom{200}{i} \binom{4^8 - 200 - i}{\,200-i\,}}{\binom{4^8}{200}} < 0.0031.
        % WolframAlpha prompt: sum_{i=0}^{200} i (binom(200,i) * binom(4^8-200-i,200-i))/(binom(4^8,200)) /200
    \end{align*}
    Substituting this bound into \Cref{eq:tau_W} yields
    \begin{align*}
        \tau &= \prob{W \geq n-t | x \notin \mathcal{Q}_{\adversary}}
        \\
        &\leq \sum_{i=144}^{255} \binom{255}{i} (0.0031)^i (0.9969)^{255-i}
        \\
        &< 10^{-280}.
        % WolframAlpha prompt: sum_{i=144}^{255} binom(255,i) (0.0031)^i * (0.9969)^(255-i)
    \end{align*}

    We conclude that the \ac{cfs} of \cite{luescher_2024_ChemicalUnclonablea} is approximately $(10^{-280}, 0)$-unclonable. Owing to the large amount of randomness in the \ac{ordna} pool, this extremely small success probability supports the intuition that synthesizing a random pool will almost surely not yield a functioning clone. 
    We want to point out that this analysis relies on several simplifications and assumptions, e.g., the uniformity of the $k$-mer distribution and the independence of MinHash outputs. Furthermore, the data is based on few experimental results rather than extensive statistical evidence. A more detailed analysis can start with the actual randomness in the pool of DNA like in \cite{meiser_2020_DNASynthesis} and evaluate all data processing steps. However, this would lead to a much more complex analysis with many parameters to consider. Therefore, we would recommend for a practical validation of a \ac{cfi} to perform empirical experiments on the error rates after the filtering and MinHash steps to obtain a noise model.
    
\end{example}

\subsubsection{Asymptotic Unclonability}
For defining unclonability in an asymptotic regime regarding $n$, we follow the concepts of \Cref{subsec:authentication} and \cite{katz_2020_IntroductionModern}.
We say a \ac{cfi} is \emph{asymptotically unclonable} if no adversary can win the cloning game $\attack_{\adversary, \cfs{}}$ with non-negligible probability for any valid instantiation $\cfs{}$ of $\cfi$. This provides a concrete guarantee that creating a functional forgery is computationally infeasible.

\begin{definition}[Asymptotic Unclonability]\label{def:asymptotic_unclonability}
    A \acl{cfs} $\cfs{}$ is \emph{asymptotically unclonable under chosen challenge attack} if for all polynomial-time adversaries $\adversary$ it holds that
    \begin{align*}
        \prob{\attack_{\adversary, \cfs{\profile}}(n) = 1} \in \negl(n).
    \end{align*}
\end{definition}

Assuming a \ac{cfi} that can be instantiated with any security parameter $n$, an adversary's success probability in winning the cloning game must be negligible in $n$. We continue the example from \cite{luescher_2024_ChemicalUnclonablea} to illustrate the concept of unclonability. 

\begin{example}\label{ex:ordna_asymptotic_unclonability}

The scheme in \cite{luescher_2024_ChemicalUnclonablea} is presented for a fixed length of challenges and output signals. Therefore, it is trivial to see that this finite system cannot be asymptotically unclonable. 
Without analyzing practical feasibility we could, however, consider a multi-stage approach in the selection \ac{pcr} process. Assume that multiple random parts are added to the DNA molecules and multiple rounds of selection \ac{pcr} are performed. In each round, a different random part is used for the selection \ac{pcr}. An example of this idea is presented in \Cref{fig:2-stage_ordna_design} for a two-stage selection \ac{pcr}. In the first stage, a random part is selected and amplified. In the second stage, a different random part is used for further amplification. This process could be repeated multiple times. The final pool will then contain only strands that have the correct sequences in all random parts.

\begin{figure}
    \centering
    \begin{tikzpicture}[node distance=0cm, transform shape, scale=0.7]
    % Define styles
    \tikzset{
        constant/.style={draw, fill=black, minimum height=0.5cm, outer sep=0, text=white},
        random/.style={draw, fill=lightgray, minimum height=0.5cm, outer sep=0, text=black},
        label/.style={above, midway, yshift=0.2cm},
        brace/.style={decorate, decoration={brace, amplitude=5pt}}
    }

    % DNA strand parts (widths are proportional to nt length)
    \node[constant, minimum width=2.0cm] (handle1) {};
    \node[random,   minimum width=2.0cm, right=of handle1]  (input1) {};
    \node[constant, minimum width=2.0cm, right=of input1]   (adapter1) {};
    \node[random,   minimum width=2.0cm, right=of adapter1] (input3) {};
    \node[constant, minimum width=2.0cm, right=of input3]   (adapter3) {};
    \node[random,   minimum width=2.0cm, right=of adapter3] (output) {};
    \node[constant, minimum width=2.0cm, right=of output]   (adapter4) {};
    \node[random,   minimum width=2.0cm, right=of adapter4] (input4) {};
    \node[constant, minimum width=2.0cm, right=of input4]   (adapter2) {};
    \node[random,   minimum width=2.0cm, right=of adapter2] (input2) {};
    \node[constant, minimum width=2.0cm, right=of input2]   (handle2) {};

    % Labels for parts
    \def \YSHIFT2{0.1cm}
    \def \XSHIFT2{0.05cm}
    \draw[brace] ([yshift=\YSHIFT2, xshift=\XSHIFT2]handle1.north west) -- ([yshift=\YSHIFT2, xshift=-\XSHIFT2]handle1.north east) node[label, align=center] {Handle I\\(20 nt)};
    \draw[brace] ([yshift=\YSHIFT2, xshift=\XSHIFT2]input1.north west) -- ([yshift=\YSHIFT2, xshift=-\XSHIFT2]input1.north east) node[label, align=center] {Input I\\(9 nt)};
    \draw[brace] ([yshift=\YSHIFT2, xshift=\XSHIFT2]adapter1.north west) -- ([yshift=\YSHIFT2, xshift=-\XSHIFT2]adapter1.north east) node[label, align=center] {Adapter I\\(20 nt)};
    \draw[brace] ([yshift=\YSHIFT2, xshift=\XSHIFT2]input3.north west) -- ([yshift=\YSHIFT2, xshift=-\XSHIFT2]input3.north east) node[label, align=center] {Input III\\(9 nt)};
    \draw[brace] ([yshift=\YSHIFT2, xshift=\XSHIFT2]adapter3.north west) -- ([yshift=\YSHIFT2, xshift=-\XSHIFT2]adapter3.north east) node[label, align=center] {Adapter III\\(20 nt)};
    \draw[brace] ([yshift=\YSHIFT2, xshift=\XSHIFT2]output.north west) -- ([yshift=\YSHIFT2, xshift=-\XSHIFT2]output.north east) node[label, align=center] {Output\\(42 nt)};
    \draw[brace] ([yshift=\YSHIFT2, xshift=\XSHIFT2]adapter4.north west) -- ([yshift=\YSHIFT2, xshift=-\XSHIFT2]adapter4.north east) node[label, align=center] {Adapter IV\\(21 nt)};
    \draw[brace] ([yshift=\YSHIFT2, xshift=\XSHIFT2]input4.north west) -- ([yshift=\YSHIFT2, xshift=-\XSHIFT2]input4.north east) node[label, align=center] {Input IV\\(10 nt)};
    \draw[brace] ([yshift=\YSHIFT2, xshift=\XSHIFT2]adapter2.north west) -- ([yshift=\YSHIFT2, xshift=-\XSHIFT2]adapter2.north east) node[label, align=center] {Adapter II\\(21 nt)};
    \draw[brace] ([yshift=\YSHIFT2, xshift=\XSHIFT2]input2.north west) -- ([yshift=\YSHIFT2, xshift=-\XSHIFT2]input2.north east) node[label, align=center] {Input II\\(10 nt)};
    \draw[brace] ([yshift=\YSHIFT2]handle2.north west) -- ([yshift=\YSHIFT2]handle2.north east) node[label, align=center] {Handle II\\(20 nt)};

    % Selection PCR primer binding sites
    \draw[red, thick, |-|] ([yshift=-0.3cm, xshift=-0.7cm]input1.south west) -- ([yshift=-0.3cm, xshift=-0.1cm]input1.south east);

    \draw[red, thick, |-|, yshift=-0.6cm] ([yshift=-0.3cm, xshift=0.1cm]input2.south west) -- ([yshift=-0.3cm, xshift=0.7cm]input2.south east);

    \draw[green, thick, |-|] ([yshift=-0.3cm, xshift=-0.7cm]input3.south west) -- ([yshift=-0.3cm, xshift=-0.1cm]input3.south east);

    \draw[green, thick, |-|, yshift=-0.6cm] ([yshift=-0.3cm, xshift=0.1cm]input4.south west) -- ([yshift=-0.3cm, xshift=0.7cm]input4.south east);

    % Legend
    \begin{scope}[xshift=-0.75cm, yshift=-1.5cm]
        \node[constant, minimum height=0.25cm, minimum width=0.5cm] (constant) {};
        \node[right, right=of constant] {Constant/determined part};
        \node[random, below=0.2cm of constant, minimum height=0.25cm, minimum width=0.5cm] (random) {};
        \node[right, right=of random] {Randomly synthesized part};
        \node[below=0.2cm of random, minimum height=0.25cm, minimum width=0.5cm] (selection1) {};
        \draw[red, thick, |-|] (selection1.west) -- (selection1.east);
        \node[right, right=of selection1] {1st Selection PCR Primer binding site};
        \node[below=0.2cm of selection1, minimum height=0.25cm, minimum width=0.5cm] (selection2) {};
        \draw[green, thick, |-|] (selection2.west) -- (selection2.east);
        \node[right, right=of selection2] {2nd Selection PCR Primer binding site};
    \end{scope}
\end{tikzpicture}
    \caption{The structure of a two-stage \ac{ordna} molecule.}
    \label{fig:2-stage_ordna_design}
\end{figure}
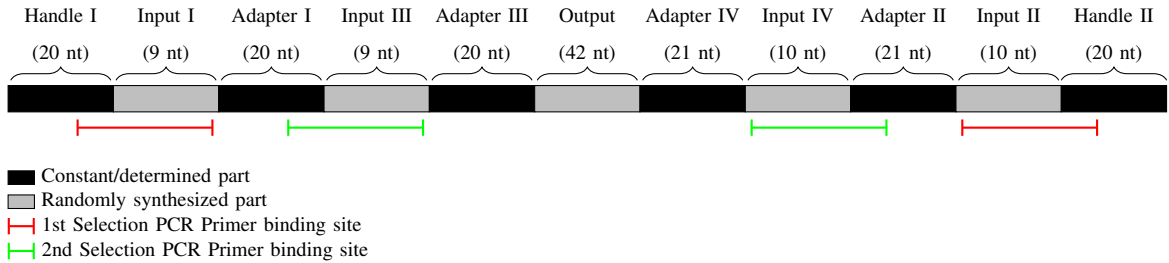

If we assumed that this scheme can be extended arbitrarily, the generation function $\gen$ would take the security parameter $n$ as input and set the number of selection \ac{pcr} stages as a linear function in $n$ to generate the \ac{ordna}. 
Furthermore, the length of the random output part will also increase linearly in $n$. This will be adopted by the MinHash algorithm and the error-correcting code in the extraction algorithm such that the length and dimension as well as the minimum distance of the code also increase linearly in $n$. Therefore, the number of possible outputs will increase exponentially in $n$.
In the following, we refer to this scheme as \emph{multi-stage \ac{ordna}}. We show that the multi-stage \ac{ordna} is asymptotically unclonable in \Cref{sec:asymptotic_unpredictability}.
\end{example}

\subsection{Unpredictability}\label{sec:unpredictability}

Chemical functions can also serve as a source of shared secrets. Owing to the amplification step, multiple replicas of the same underlying randomness can be produced and distributed. Possessing two pools derived from the same \ac{cf} enables two parties to establish a common secret: both issue an identical challenge and, using publicly shared helper data, extract the same secret. For this construction to be secure, the \ac{cf} must be unpredictable.

Unpredictability differs from unclonability: an adversary should be unable to compute the correct response to a fresh challenge even without constructing a physical clone. For instance, a scheme that is straightforward to analyze but prohibitively expensive to replicate chemically might satisfy asymptotic unclonability while failing to provide unpredictability. We therefore introduce a distinct security game that models an adversary's task of predicting the response to a new challenge after interacting with the authentic \ac{cf}; it coincides with the cloning game, defined in \Cref{def:cloning_game}, except for the final step.

\begin{definition}[Prediction Game]\label{def:prediction_game}
    For a \acl{cfs} $\cfs{\profile}$ and an adversary $\adversary$, the \emph{prediction game} $\sattack_{\adversary, \cfs{\profile}}(n)$ is defined as follows.
    \begin{itemize}
        \item An authentic \acl{cfs} $\cfs{\profile}$ with profile $\profile \in \setprofile$ is generated via $\gen(n)$.
        \item The adversary $\adversary$ is given oracle access to the \acl{cf} $\cf{\profile}$ of $\cfs{\profile}$ and may adaptively query challenges $x_1, \ldots, x_q \in \setchal$, receiving responses $y_1, \ldots, y_q$. Let $\mathcal{Q}_{\adversary} = \{x_1, \ldots, x_q\}$ denote the set of queried challenges.
        \item A fresh challenge $x \notin \mathcal{Q}_{\adversary}$ is chosen uniformly at random and some helper data $h$ is obtained through $(z, h) = \cfs{\profile}(x, \emptyset)$. The adversary succeeds if they predict the output $z = \cfs{\profile}(x,h)$. In this case, the outcome of the game is $1$.
    \end{itemize}
\end{definition}

Again, we analyze unpredictability in a finite domain and asymptotically, similar to unclonability in \Cref{sec:unclonability}.

\subsubsection{Finite Domain Unpredictability}

\begin{definition}[Finite Domain Unpredictability]\label{def:finite_domain_unpredictability}
    Let $\cfs{\profile}$ be a \acl{cfs} with profile $\profile \in \setprofile$.
    Let an adversary $\adversary$ play the prediction game of \Cref{def:prediction_game} on $\cfs{\profile}$, issuing at most $q$ (adaptive) challenge queries and thus obtaining $q$ responses. 
    We say that $\cfs{\profile}$ is $(\sigma, q)$-\emph{unpredictable} with respect to a challenge subset $\setchal' \subseteq \setchal$ if 
    \begin{align*}
        \prob{\sattack_{\adversary, \cfs{\profile}}(n) = 1} \leq \sigma 
    \end{align*}
\end{definition}

The adversary's success probability in predicting the correct response to a fresh challenge is bounded by the robustness $\rho_{\cfs{\profile}}(x)$ of the \ac{cfs} for that challenge.

\begin{example}
    Continuing the example from \Cref{sec:unclonability}, we analyze the unpredictability of the \ac{ordna} scheme \cite{luescher_2024_ChemicalUnclonablea} in a finite domain manner. In the trivial case, when $q=0$, an adversary is left with guessing the correct response. Since the extraction algorithm maps the responses to a codeword of a Reed-Solomon code with dimension $k=32$, the adversary has to guess one out of $256^{32}$ codewords. Therefore, the success probability is given by $\sigma = 1/256^{32} \approx 8.6 \times 10^{-78}$ and thus the \ac{ordna} scheme is $(8.6 \times 10^{-78}, 0)$-unpredictable. In the other extreme, if the adversary has queried all possible challenges, the correct response can simply be looked up. However, the success probability is still upper-bounded by the system's robustness. Therefore, the scheme is $(\rho({\cfs{\profile}, x}), 4^{13})$-unpredictable.

    In the non-trivial case, we bound the success probability of the adversary's attack with the input entropy of the \ac{ordna} pool for disjoint sets of challenges as follows.

    \begin{theorem}\label{the:ordna_finite_domain_unpredictability}
        Let $\kappa$ be a positive integer. Given a challenge $x' \in \setchal$, we denote by $\mathcal{B}_{\kappa}(x')$ the Hamming ball of radius $\kappa$ centered at $x'$. Assume that 
        no information is leaked for challenges $x \notin \mathcal{B}_{\kappa}(x')$. Then, given the length $\ell$ of the selection \ac{pcr} primers, the \ac{ordna} scheme is $(\sigma, q)$-unpredictable with 
        \begin{align*}
            \sigma \geq \frac{1 + q \sum_{i=0}^{\kappa} \binom{\ell}{i} \cdot 3^{i}}{4^{\ell}}.
        \end{align*}
    \end{theorem}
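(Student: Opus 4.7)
The proof plan is to construct an adversary with an explicit lower bound on its success probability in the prediction game of \Cref{def:prediction_game}, and then to invoke the definition of $(\sigma,q)$-unpredictability in the reverse direction: since \Cref{def:finite_domain_unpredictability} requires $\prob{\sattack_{\adversary,\cfs{\profile}}(n)=1}\le\sigma$, any strategy achieving success probability at least $(1+q|\mathcal{B}_\kappa|)/4^\ell$ forces the parameter $\sigma$ to meet the asserted lower bound. The challenge space is identified with length-$\ell$ selection-\ac{pcr} primers, so $|\setchal|=4^\ell$, and because the fresh challenge is drawn uniformly, the success probability is exactly the fraction of challenges on which the adversary answers correctly.

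First I would describe the adversary. It issues $q$ queries $x_1,\ldots,x_q\in\setchal$ chosen so that the Hamming balls $\mathcal{B}_\kappa(x_i)$ are pairwise disjoint, which is feasible whenever $q\sum_{i=0}^{\kappa}\binom{\ell}{i}3^{i}\le 4^\ell$ (outside this regime the stated bound exceeds $1$ and the claim is vacuous). Upon receiving $z_1,\ldots,z_q$, the adversary invokes the theorem's hypothesis: since information is leaked only for challenges inside some $\mathcal{B}_\kappa(x_i)$, the adversary treats this as full recoverability of the extracted output on every $x\in\mathcal{B}_\kappa(x_i)$ from $z_i$ and the public extraction algorithm. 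In addition, the adversary commits in advance to one preselected pair $(x^\star,z^\star)$ with $x^\star$ outside every ball, which accounts for the "$+1$" baseline guess available even when $q=0$.

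The prediction rule and counting then follow directly. If the fresh $x$ lies in $\bigcup_i\mathcal{B}_\kappa(x_i)$, the adversary outputs the derived response and wins; the favourable ball-contribution has size $q\sum_{i=0}^{\kappa}\binom{\ell}{i}3^{i}$ by disjointness. If instead $x=x^\star$, the adversary outputs $z^\star$ and wins by construction, contributing the additional $+1$. As the two contributing events are disjoint by choice of $x^\star$, summing and dividing by $4^\ell$ gives
\begin{align*}
\prob{\sattack_{\adversary,\cfs{\profile}}(n)=1}\ \ge\ \frac{1+q\sum_{i=0}^{\kappa}\binom{\ell}{i}3^{i}}{4^{\ell}},
\end{align*}
which is exactly the claimed lower bound on $\sigma$; the $q=0$ special case degenerates to uniform random guessing with probability $1/4^\ell$, providing a sanity check.

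The step I expect to require the most care is the quantitative use of the leakage hypothesis: the theorem only states that \emph{no} information is leaked outside a $\kappa$-ball, so the argument needs the companion reading that inside a ball the response is actually recoverable with probability essentially one. This is the natural interpretation for \ac{ordna}, where two primers at small Hamming distance amplify heavily overlapping sequence subsets so that the MinHash-extracted output of a nearby challenge is predictable from that of the centre; a fully rigorous treatment would either build this into the hypothesis explicitly, or replace the derived events with the weaker "MinHash overlap exceeds the fuzzy-vault correction radius" and verify that the resulting probability still dominates $(1+q|\mathcal{B}_\kappa|)/4^\ell$.
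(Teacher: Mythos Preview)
You have proved the wrong inequality. The paper reads the statement as: the scheme \emph{is} $(\sigma,q)$-unpredictable for the value $\sigma=\bigl(1+q\sum_{i=0}^{\kappa}\binom{\ell}{i}3^{i}\bigr)/4^{\ell}$ (and hence for every larger $\sigma$). Accordingly, its proof establishes an \emph{upper} bound on the success probability of \emph{every} adversary. Concretely, it splits according to whether the fresh challenge $x$ falls in $\bigcup_i \mathcal{B}_\kappa(x_i)$; on this event it simply upper-bounds the adversary's conditional success by $1$ (this is where ``overestimating the adversary'' appears), while off this event the no-leakage hypothesis forces the adversary to guess, giving conditional success $\approx 1/(4^\ell-q\Phi(\ell,\kappa))$. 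Combining via the union bound yields $\prob{\sattack_{\adversary,\cfs{\profile}}(n)=1}\le (1+q\Phi(\ell,\kappa))/4^{\ell}$.

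Your argument instead exhibits a single adversary and lower-bounds its success probability, i.e., you argue that no $\sigma$ below the displayed value can work. That is the converse direction and is not what the paper is claiming or proving. Moreover, even for that converse direction your key step fails: the hypothesis says nothing about what \emph{is} leaked inside $\mathcal{B}_\kappa(x_i)$, so you cannot conclude that the adversary recovers the output for every $x$ in the ball; you yourself flag this gap in the last paragraph. In the paper's direction this issue disappears precisely because granting the adversary probability~$1$ inside the balls is a valid upper bound regardless of how much is actually leaked there.
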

    \begin{proof}
        Assume an adversary $\adversary$ performs the prediction game $\sattack_{\adversary, \cfs{\profile}}(n)$. Recall that $\mathcal{Q}_{\adversary}$ denotes the set of messages queried by $\adversary$, and denote by $\mathcal{P}_{\adversary} = \{(x_1,z_1), \ldots, (x_q,z_q)\}$ the set of challenge-output pairs.
        Let $x \in \setchal = \set{\A, \C, \G, \T}^{\ell}$ be a challenge chosen uniformly at random, and let $Z$ be a random variable representing the output of the \ac{ordna} scheme for $x$.
        By assumption, $\adversary$'s attack is successful if for the guessed new challenge $x$ there is an $x_i \in \mathcal{Q}_{\adversary}$ such that $x \in \mathcal{B}_{\kappa}(x_i)$. Additionally, the attack may be successful if $\adversary$ guesses $x$ correctly even though it does not lie in any $\kappa$-neighborhood. For each $x_i \in \mathcal{Q}_{\adversary}$, we denote the set of neighboring challenge-output pairs by $\mathcal{N}_{\kappa}(x_i) = \{(x', z) \mid (z,h) = \cfs{\profile}(x', h),\ x' \in \mathcal{B}_{\kappa}(x_i)\}$. Note that, for $x_i \in \mathcal{Q}_{\adversary}$ and for all $x \in \mathcal{N}_{\kappa}(x_i)$ we have $H(Z \ | \ x) \geq 0$, which allows us to bound
        \begin{align}
            \prob{\sattack_{\adversary, \cfs{\profile}}(n) = 1} 
            \leq \prob{x \in \bigcup_{i=1}^{q} \mathcal{N}_{\kappa}(x_i)} + \prob{x \notin \bigcup_{i=1}^{q} \mathcal{N}_{\kappa}(x_i)} \cdot 2^{-H(Z \ | \ x \notin \mathcal{Q}_{\adversary}, \mathcal{P}_{\adversary})} \label{eq:prob_prediction_win}
        \end{align}
        For any $x_i \in \setchal$, we denote $\Phi_i(\ell,\kappa) := |\mathcal{N}_{\kappa}(x_i)|$. Note that, $\Phi_i(\ell,\kappa) = |\mathcal{B}_{\kappa}(x_i)|$ which is independent of the center $x_i \in \setchal$. Indeed, for any $i = 1, \ldots, q$, we have
        \begin{align*}
            \Phi_i(\ell, \kappa) = \sum_{j=0}^{\kappa} \binom{\ell}{j} \cdot (4-1)^{j}.
        \end{align*}
        For simplicity, we thus omit the subscript $i$ and only write $\Phi(\ell, \kappa)$.
        Then, by the union bound, the probability of $x$ lying in one of the neighborhoods $\mathcal{N}_{\kappa}(x_i)$ is given by
        \begin{align*}
            \prob{x \in \bigcup_{i=1}^{q} \mathcal{N}_{\kappa}(x_i)} \leq \frac{q \Phi(\ell, \kappa)}{4^{\ell}}.
        \end{align*}
        We can now bound the conditional entropy $H(Z | x \notin \mathcal{Q}_{\adversary}, \mathcal{P}_{\adversary})$ from below as
        \begin{align*}
            H(Z | x \notin \mathcal{Q}_{\adversary}, \mathcal{P}_{\adversary}) 
            &\geq H\left(Z \ \bigg| \ x \in \setchal \setminus \bigcup_{i=1}^{q} \mathcal{N}_{\kappa}(x_i)\right) 
            \\
            &\approx \log_2(|\setchal| - q \Phi(\ell, \kappa)).
        \end{align*}
        Note that the inequality follows from the fact that even within $\mathcal{B}_{\kappa}(x_i)$ the adversary does not necessarily have full information.
        Eventually, by substituting these values in Equation \eqref{eq:prob_prediction_win}, we get
        \begin{align*}
            \prob{\sattack_{\adversary, \cfs{\profile}}(n) = 1}
            &\leq \frac{q \Phi(\ell, \kappa)}{4^{\ell}} + \left(1 - \frac{q \Phi(\ell, \kappa)}{4^{\ell}}\right) \frac{1}{4^{\ell} - q \Phi(\ell, \kappa)} 
            \\
            &= \frac{q \Phi(\ell, \kappa) + 1}{4^{\ell}}
        \end{align*}
        which concludes the proof.
    \end{proof}
    The assumption that an adversary can predict any $x \in \mathcal{B}_{\kappa}(x_i)$ is based on the observations from the data in \cite{luescher_2024_ChemicalUnclonablea}, showing that the response to challenges which differ in one position are not completely uncorrelated. 
    It suggests that the average relative Hamming distance of the responses of $x, x'$ with $\dist{x,x'} = 1$ after the MinHash algorithm is approximately $89.25\%$ which is less than the expected distance of two random vectors given by $1 - \frac{1}{256} \approx 99.61\%$. Therefore, we are overestimating the adversary's capabilities which makes the derived bound on the success probability more conservative. 
    Additionally, in \cite{luescher_2024_ChemicalUnclonablea} they chose $\ell = 13$ and take $Z$ to be the random variable with realizations in a Reed-Solomon code $\mathcal{C}_{RS}$ of length $n = 255$ and dimension $k = 32$. Since then $|\setchal| = 4^{13} \ll 256^{32} = |\mathcal{C}_{RS}|$, it is very unlikely that two different challenges map to the same codeword. Thus, we can assume that $H(Z) \approx \log_2(4^{13}) = 26$ bits.
    
    \begin{table}
        \centering
        \begin{tabular}{c|cccccc}
            $q$ &  $10^0$ & $10^1$ & $10^2$ & $10^3$ & $10^4$ & $10^5$ \\ \hline
            $\sigma$ for $\kappa$ =$1$ & $6.1 \times 10^{-7}$ & $6.1 \times 10^{-6}$ & $6.1 \times 10^{-5}$ & $6.1 \times 10^{-4}$ & $6.1 \times 10^{-3}$ & $6.1 \times 10^{-2}$ \\
            $\sigma$ for $\kappa$ =$2$ & $1.1 \times 10^{-5}$ & $1.1 \times 10^{-4}$ & $1.1 \times 10^{-3}$ & $1.1 \times 10^{-2}$ & $1.1 \times 10^{-1}$ & $1$ \\
            $\sigma$ for $\kappa$ =$3$ & $1.3 \times 10^{-4}$ & $1.3 \times 10^{-3}$ & $1.3 \times 10^{-2}$ & $1.3 \times 10^{-1}$ & $1$ & $1$ \\
        \end{tabular}
        \caption{Bound on an adversary's success probability $\sigma$ in predicting the correct response to a fresh challenge for different values of $\kappa$ and number of queries $q$.}
        \label{tab:unpredictability_success_probabilities}
    \end{table}

    In \Cref{tab:unpredictability_success_probabilities}, we present the derived bound on an adversary's success probability $\sigma$ in predicting the correct response to a fresh challenge for different values of $\kappa$ and number of queries $q$. Note that the bound only works for small values of $q$ and $\kappa$ as otherwise the success probability will exceed $1$. However, this assumption is plausible as in the attack scenario, we do not expect an adversary to have access to $\cf{\profile}$ itself but can only query a limited number of challenges. 
\end{example}

\subsubsection{Asymptotic Unpredictability}\label{sec:asymptotic_unpredictability}

For defining unpredictability in an asymptotic regime, we again follow the concepts of \Cref{subsec:authentication} and \cite{katz_2020_IntroductionModern}. Asymptotic unpredictability is then defined as the inability of any adversary to win the prediction game $\sattack_{\adversary, \cfs{}}(n)$ with non-negligible probability in $n$. 

\begin{definition}[Asymptotic Unpredictability]\label{def:asymptotic_unpredictability}
    A \acl{cfs} $\cfs{\profile}$ is \emph{asymptotically unpredictable} if for all polynomial-time adversaries $\adversary$, it holds that
    \begin{align}\label{eq:unpredictability}
        \prob{\sattack_{\adversary, \cfs{\profile}}(n) = 1} \in \negl(n)
    \end{align}
\end{definition}

\begin{example}
    We will now analyze the asymptotic unpredictability of the multi-stage \ac{ordna} scheme introduced in \Cref{ex:ordna_asymptotic_unclonability}. Similarly to the finite domain case, we can bound the success probability of an adversary with the entropy of the \ac{ordna} pool.
    \begin{theorem}\label{the:ordna_asymptotic_unpredictability}
        The multi-stage \ac{ordna} scheme is asymptotically unpredictable.
    \end{theorem}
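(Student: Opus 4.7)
The plan is to lift the finite-domain bound of Theorem \ref{the:ordna_finite_domain_unpredictability} to the multi-stage setting, where the generation parameter $\para{G}$ scales the number of selection \ac{pcr} stages, the lengths of the random input parts, and the parameters of the Reed--Solomon code linearly in the security parameter $n$. Concretely, with $s = s(n) = \Theta(n)$ stages and per-stage primer length $\ell$, a challenge becomes a tuple of $s$ primer sequences drawn from $\set{\A,\C,\G,\T}^{s\ell}$, so that the challenge space $\setchal_n$ has size $4^{s\ell}$, which is exponential in $n$. Since any polynomial-time adversary $\adversary$ can issue at most $q = q(n) \in \text{poly}(n)$ adaptive queries, the ratio of ``covered'' challenges to total challenges will become negligible.

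First, I would replay the counting argument used in the proof of Theorem \ref{the:ordna_finite_domain_unpredictability}, but with the enlarged challenge space. Assuming, as in the single-stage case, that a query $x_i$ leaks information only on challenges within $\mathcal{B}_\kappa(x_i)$, the union bound gives
\begin{align*}
    \prob{x \in \bigcup_{i=1}^{q} \mathcal{N}_\kappa(x_i)} \leq \frac{q \cdot \Phi(s\ell,\kappa)}{4^{s\ell}},
\end{align*}
where $\Phi(s\ell,\kappa) = \sum_{j=0}^{\kappa} \binom{s\ell}{j} 3^{j}$. For fixed $\kappa$, $\Phi(s\ell,\kappa)$ is polynomial in $n$, so the whole ratio is of the form $\text{poly}(n) / 4^{\Theta(n)}$, which is negligible.

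Second, I would bound the residual guessing probability on uncovered challenges via the output entropy. Because the dimension $k(n)$ of the Reed--Solomon code grows linearly in $n$, the codeword space has size $256^{k(n)} = 2^{\Theta(n)}$, and since $|\setchal_n| \ll 256^{k(n)}$ distinct uncovered challenges map to essentially distinct codewords, the conditional min-entropy of $Z$ given $x \notin \bigcup_i \mathcal{N}_\kappa(x_i)$ and the transcript $\mathcal{P}_\adversary$ is at least $\log_2\!\bigl(4^{s\ell} - q\Phi(s\ell,\kappa)\bigr) = \Omega(n)$. Plugging both estimates into the analogue of \eqref{eq:prob_prediction_win} yields
\begin{align*}
    \prob{\sattack_{\adversary, \cfs{\profile}}(n) = 1}
    \leq \frac{q \Phi(s\ell,\kappa) + 1}{4^{s\ell}}
    = \frac{\text{poly}(n)}{2^{\Theta(n)}} \in \negl(n),
\end{align*}
which establishes the theorem.

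The main obstacle will be justifying the ``bounded leakage radius $\kappa$'' assumption as $n$ grows. One must argue that cross-stage information does not accumulate adversarially: either $\kappa$ remains constant (because each selection \ac{pcr} stage filters independently and leakage in one stage does not compromise the random parts of other stages), or $\kappa$ grows at most as $O(\log n)$, in which case $\Phi(s\ell,\kappa)$ stays polynomial in $n$ and the bound above still goes through. A clean way to formalize this is a stage-decomposition lemma showing that correlations across queried and fresh challenges vanish once any one of the $s$ primer coordinates differs sufficiently from every queried coordinate, together with a union bound over the $s$ stages; the polynomial slack this introduces is absorbed into the exponential gap between $\text{poly}(n)$ and $2^{\Theta(n)}$.
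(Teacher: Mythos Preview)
Your proposal is correct and follows essentially the same route as the paper: invoke the finite-domain bound of Theorem~\ref{the:ordna_finite_domain_unpredictability} with the combined primer length scaling linearly in $n$, observe that $q$ is polynomial and $\kappa$ may be taken constant (the paper simply asserts ``there is no reason to assume that the cross interference in the selection \ac{pcr} will increase with the number of stages''), and conclude that $\bigl(1+q\,\Phi(s\ell,\kappa)\bigr)/4^{s\ell}$ is $\text{poly}(n)/2^{\Theta(n)}\in\negl(n)$. Your additional discussion of a stage-decomposition lemma and the $\kappa=O(\log n)$ fallback is more careful than the paper's treatment, but the core argument is identical.
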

    \begin{proof}
        From \Cref{the:ordna_finite_domain_unpredictability}, the success probability of a prediction attack is bounded by
        \begin{align*}
            \prob{\sattack_{\adversary, \cfs{\profile}}(n) = 1} 
            %&\leq \frac{1 + q \Phi(\ell, \kappa)}{4^{\ell}} \\
            \leq \frac{1 + q \sum_{i=0}^{\kappa} \binom{\ell}{i} \cdot 3^{i}}{4^{\ell}},
        \end{align*}
        with $\ell$ being the combined length of the selection \ac{pcr} primers and therefore a linear function in $n$. As a prediction attack is restricted to polynomial time complexity in $n$, $q$ is a polynomial function in $n$ too. There is no reason to assume that the cross interference in the selection \ac{pcr} will increase with the number of stages. Therefore, we can assume that $\kappa$ is a constant. 
        By bounding the binomial coefficient with $\binom{\ell}{i} \leq \frac{\ell^{i}}{i!}$, we get
        \begin{align*}
            \prob{\sattack_{\adversary, \cfs{\profile}}(n) = 1} 
            &\leq \frac{1 + q \sum_{i=0}^{\kappa} \frac{\ell^{i}}{i!} \cdot 3^{i}}{4^{\ell}} \\
            &= \frac{1 + q \sum_{i=0}^{\kappa} (3\ell)^{i} \frac{1}{i!}}{4^{\ell}} \in \negl(n)
        \end{align*}
        since the denominator is an exponential function in $n$ while the numerator only consists of polynomial functions in $n$. Therefore, the multi-stage \ac{ordna} scheme is asymptotically unpredictable.
    \end{proof}
\end{example}

In the asymptotic setting, unpredictability is a stronger notion than unclonability. If an adversary can generate a clone that behaves like the authentic \ac{cf} for a fresh challenge, it can also predict the correct response to that challenge. Therefore, asymptotic unpredictability implies asymptotic unclonability as stated in the following proposition.

\begin{proposition}\label{prop:strong_implies_weak_security}
    If a \acl{cfs} is asymptotically unpredictable, then it is also asymptotically unclonable.
\end{proposition}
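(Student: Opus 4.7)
The plan is to prove the contrapositive via a black-box reduction from cloning to prediction: any polynomial-time adversary that wins the cloning game gives rise to a polynomial-time adversary that wins the prediction game with at least the same probability, so asymptotic unpredictability forces asymptotic unclonability. Toward contradiction, suppose $\cfs{\profile}$ is \emph{not} asymptotically unclonable. By \Cref{def:asymptotic_unclonability} there exists a polynomial-time adversary $\adversary_C$ with $\prob{\attack_{\adversary_C,\cfs{\profile}}(n)=1}$ non-negligible in $n$.

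From $\adversary_C$ I would construct a prediction adversary $\adversary_P$ that runs $\adversary_C$ as a subroutine inside the prediction game $\sattack_{\adversary_P,\cfs{\profile}}(n)$ as follows. Every oracle query $x_i$ that $\adversary_C$ issues to $\cf{\profile}$ is forwarded verbatim by $\adversary_P$ to its own oracle, and the received response is relayed back. Consequently $\mathcal{Q}_{\adversary_P}=\mathcal{Q}_{\adversary_C}$, so $\adversary_C$'s internal view is distributed exactly as in a genuine cloning game. When $\adversary_C$ halts with a counterfeit chemical function $\cf{\profile'}$, $\adversary_P$ retains it. Upon being presented with the fresh challenge $x\notin\mathcal{Q}_{\adversary_P}$ together with the public helper data $h$ obtained from $(z,h)=\cfs{\profile}(x,\emptyset)$, $\adversary_P$ computes $z':=\cfs{\profile'}(x,h)$ by evaluating its own clone and outputs $z'$ as its prediction.

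The correctness step is immediate from the two win conditions: whenever $\adversary_C$ would win the cloning game, \Cref{def:cloning_game} guarantees $\cfs{\profile'}(x,h)=\cfs{\profile}(x,h)=z$, hence $z'=z$ and $\adversary_P$ wins on the same transcript. This yields
\begin{align*}
    \prob{\sattack_{\adversary_P,\cfs{\profile}}(n)=1} \;\geq\; \prob{\attack_{\adversary_C,\cfs{\profile}}(n)=1},
\end{align*}
and the right-hand side is non-negligible, contradicting \Cref{def:asymptotic_unpredictability}.

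The main subtlety, and arguably the only nontrivial point, is the cost model: one must argue that $\adversary_P$ remains polynomial-time even though it must \emph{physically} evaluate the clone $\cf{\profile'}$. Since $\adversary_C$ is already permitted to fabricate $\cf{\profile'}$ within its polynomial budget in the cloning game, and a single additional query to that fabricated object has the same structural cost as a single query to $\cf{\profile}$ (both being governed by the shared evaluation parameter $\alpha_E$ of the family $\cfi$), this extra operation is polynomial in $n$. Making this cost assumption explicit, together with the convention established in \Cref{sec:chemical_functions} that helper data is public throughout the framework, closes the argument.
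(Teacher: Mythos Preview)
Your proof is correct and follows essentially the same approach as the paper's: both argue the contrapositive by observing that a successful cloning adversary can be turned into a successful prediction adversary by evaluating the counterfeit $\cfs{\profile'}$ on the fresh challenge, with the efficiency of that extra evaluation being the only point requiring justification. Your version is more explicit in spelling out the black-box reduction and in flagging the cost-model and helper-data assumptions, whereas the paper compresses the argument into a few lines, but the underlying idea is identical.
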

\begin{proof}
    Assume by contradiction that for every $n$ there exists \acl{cfs} $\cfs{\profile}$ with profile $\profile$ that is unpredictable but clonable. Then there exists an efficient attack that can construct $\cfs{\profile'}$ such that for a fresh challenge $x$ it holds $\prob{\attack_{\adversary, \cfs{\profile}}(n) = 1} \notin \negl(n)$. This implies that 
    \begin{align*}
        \prob{\cfs{\profile'}(x,h) = \cfs{\profile}(x,h)} \notin \negl(n).
    \end{align*}

    The probability is over a fresh challenge $x$ with corresponding helper data obtained through $(x,h) = \cfs{\profile}(x, \emptyset)$. Possessing such a \ac{cfs} with profile $\profile'$, an adversary can answer the fresh challenge with non-negligible probability in $n$, thereby also predicting the correct response in the prediction game (since evaluating $\cfs{\profile'}$ is efficient). This contradicts the assumed security against prediction.
\end{proof}

However, it is possible that a \ac{cfs} is asymptotically unpredictable but amplifiable. As mentioned before, we need to distinguish between \emph{Sequencing and Synthesis} attacks and \emph{Chemical Amplification} attacks. The cloning and unpredictability games (see \Cref{def:cloning_game} and \Cref{def:prediction_game}, respectively) give an adversary only the power for a \emph{Sequencing and Synthesis} attack. If an adversary had the power to amplify the \ac{cf} chemically, it would be possible to have a clonable but unpredictable \ac{cfs} without contradicting \Cref{prop:strong_implies_weak_security}. \Cref{the:ordna_asymptotic_unpredictability} and \Cref{prop:strong_implies_weak_security} immediately imply the following result.
    \begin{corollary}
        The multi-stage \ac{ordna} scheme is an asymptotically unclonable \ac{cfs}.
    \end{corollary}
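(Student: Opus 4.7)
The plan is to observe that this corollary is a direct consequence of two results already established in the excerpt, so the proof reduces to a short chain of implications rather than any new technical content. First, I would invoke \Cref{the:ordna_asymptotic_unpredictability}, which establishes that the multi-stage \ac{ordna} scheme is asymptotically unpredictable; this is the substantive piece of work, where the union-bound argument on Hamming balls together with the exponential growth of $4^\ell$ versus polynomial growth of $q$ and $\ell$ does the heavy lifting.

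Second, I would apply \Cref{prop:strong_implies_weak_security}, which asserts that any asymptotically unpredictable \ac{cfs} is also asymptotically unclonable. Combining these two yields the corollary immediately: asymptotic unpredictability of multi-stage \ac{ordna} implies its asymptotic unclonability. Formally, I would write a single sentence stating that the conclusion follows by applying \Cref{prop:strong_implies_weak_security} to the result of \Cref{the:ordna_asymptotic_unpredictability}.

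There is essentially no obstacle here, since both prerequisites are in hand. The only subtle point worth flagging explicitly in the proof is the reminder (already made after \Cref{prop:strong_implies_weak_security}) that the implication holds only under the \emph{Sequencing and Synthesis} adversary model captured by the cloning game in \Cref{def:cloning_game}, and not against an adversary with \emph{Chemical Amplification} capability. Since the asymptotic unclonability notion in \Cref{def:asymptotic_unclonability} is defined against precisely this adversary, no additional assumption is needed, and the corollary follows as stated.
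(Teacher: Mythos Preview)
Your proposal is correct and matches the paper's approach exactly: the paper states that the corollary follows immediately from \Cref{the:ordna_asymptotic_unpredictability} and \Cref{prop:strong_implies_weak_security}, which is precisely the two-step implication you describe.
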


We could show results on the unclonability and unpredictability of the \ac{ordna} scheme in both finite domain and asymptotic settings. 
These results demonstrate the potential of \ac{dna}-based chemical functions to provide strong security guarantees based on the inherent complexity of molecular interactions. On the other hand, it will not be possible to show similar results for the simple dye molecule scheme from \Cref{ex:dye_extract} as the number of possible challenges is limited and the concentrations can easily be measured and reproduced.

%~~~~~~~~~~~~~~~~~~~~~~~~~~~~~~~~~~~~~~~~~~~~~~~~~~
%   Section: Chemical Authentication Schemes
\section{Applications of Chemical Functions}\label{sec:application}
\aclp{cf} could have a wide range of potential applications. In the following sections, we will explore two specific applications of \acp{cf} in the authentication of many different materials and key generation.
\subsection{Chemical Authentication Schemes}\label{sec:cf_auth}

Chemical functions provide a foundation for designing authentication schemes. At a high level, an instance of a \acl{cf} $\cf{\profile}$ with profile $\profile$ (henceforth, a ``valid'' reference instance) and another instance $\cf{\profile'}$ with profile $\profile'$ of unknown origin are stimulated with the same challenge $x$. Each evaluation produces a noisy response due to the inherent stochasticity of chemical processes and measurement. A suitable extraction procedure then maps these noisy responses to stable representatives that can be compared by a verification algorithm. If the representatives match under the verifier, the scheme accepts, otherwise, it rejects. 

Using evaluation parameters $\alpha_{\mathrm{E}}$, a challenge $x$ is applied to $\cf{\profile}$ and $\cf{\profile'}$ yielding noisy responses $y$ and $y'$. An extraction algorithm in setup mode (empty helper data $\emptyset$ as input) parameterized by $\alpha_{\mathrm{X}}$ derives a stable output $z$ and helper data $h$ from $y$. The same extraction algorithm using the same extraction parameters $\alpha_{\mathrm{X}}$, when applied to $y'$ with helper data $h$, produces an output $z'$. A verification algorithm $\verify$ finally compares $z$ and $z'$ and outputs $1$ (accept) if they are close in some chosen metric, and $0$ (reject) otherwise. In practice, $\verify$ can be an exact equality test or a distance-threshold test on multiple outputs from each \ac{cf}. \Cref{fig:cf_authentication} summarizes the protocol.
Note that the extraction algorithm could also be the identity function, i.e., no extraction is performed. In that case the verification algorithm would compare the distance of the two noisy responses directly and decide acceptance based on a threshold.

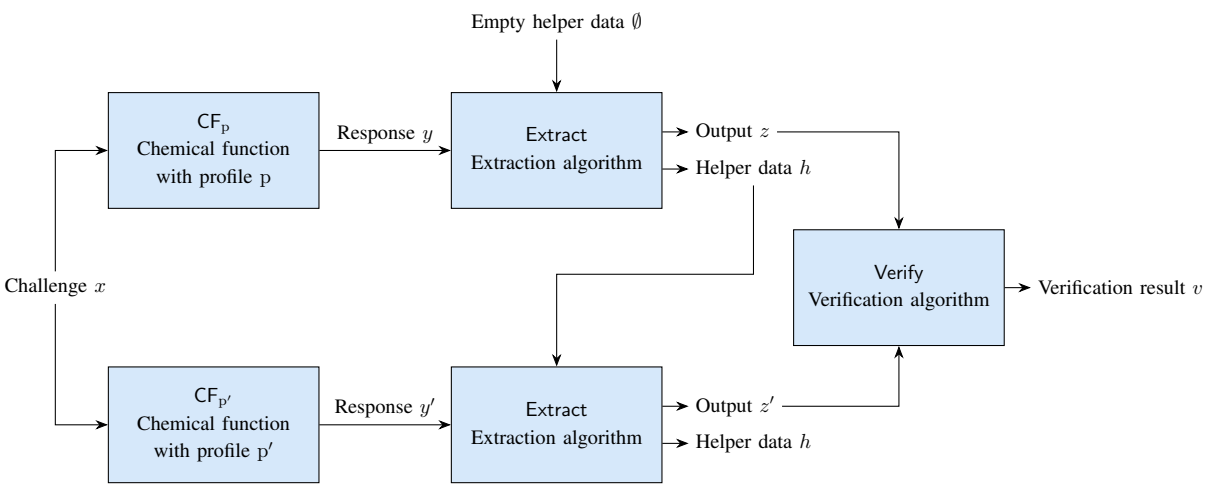
\begin{figure}[h]
    \centering
    \begin{tikzpicture}[node distance=0cm, transform shape, scale=0.7]

\definecolor{noisecolor}{HTML}{5c0404} 

% Define all styles using tikzset for better organization and reuse
\tikzset{
    function/.style={
        rectangle, 
        draw, 
        fill=bleudefrance!20, 
        align=center, 
        inner ysep=10pt,
        minimum width=4cm,
        minimum height=2.2cm,
        execute at begin node=\setlength{\baselineskip}{15pt}
    },
    component/.style={
        rectangle, 
        draw=green!30!black, 
        fill=green!15!white, 
        dashed, 
        align=center, 
        inner ysep=10pt,
        minimum width=4cm,
        minimum height=1.2cm,
        execute at begin node=\setlength{\baselineskip}{15pt}
    },
    arrow_style/.style={
        ->, 
        >=Stealth
    },
    param_arrow/.style={
        ->, 
        >=Stealth, 
        dashed, 
        draw=green!30!black
    },
    label_style/.style={ 
        execute at begin node=\setlength{\baselineskip}{15pt}
    },
    system_box/.style={
        rectangle, 
        draw, 
        thick, 
        inner sep=10pt
    }
}

% === NODES ===
% Place the main nodes first in a logical order (left to right)
\node (chemical_function) [function] {$\cf{\profile}$ \\ Chemical function \\ with profile $\profile$};
\node (extract_func) [function, right=2.5cm of chemical_function] {$\extract$ \\ Extraction algorithm};

\node (chemical_function2) [function, below=3cm of chemical_function] {$\cf{\profile'}$ \\ Chemical function \\ with profile $\profile'$};
\node (extract_func2) [function, right=2.5cm of chemical_function2] {$\extract$ \\ Extraction algorithm};

\node (compare) [function] at ($(extract_func)!0.5!(extract_func2) + (6.5cm,0)$) {$\verify$ \\ Verification algorithm};

% Place nodes related to "CF"
% \node (eval_param) [below=1cm of chemical_function.south] {Evaluation parameter $\alpha_{\text{E}}$};
% \node (noise) [noisecolor, left=1.25cm of chemical_function.west, yshift=0.5\baselineskip] {Noise};
\node (challenge_x) at ($(chemical_function.west)!0.5!(chemical_function2.west) + (-1cm,0)$) {Challenge $x$};

% Place nodes related to "Extract"
% \node (extract_param) [below=1cm of extract_func.south] {Extraction parameter $\alpha_{\text{X}}$};
\node (helper_data_h1_in) [above=1cm of extract_func.north] {Empty helper data $\emptyset$};
\node (helper_data_h1_out) [right=0.5cm of extract_func.east, yshift=-0.5\baselineskip] {Helper data $h$};
\node (output_z) [right=0.5cm of extract_func.east, yshift=0.5\baselineskip] {Output $z$};
\node (output_z2) [right=0.5cm of extract_func2.east, yshift=0.5\baselineskip] {Output $z'$};
\node (helper_data_h2_out) [right=0.5cm of extract_func2.east, yshift=-0.5\baselineskip] {Helper data $h$};

% === ARROWS ===

% Arrows for the "CF" part
\draw [arrow_style] (challenge_x.north) |- (chemical_function.west);
\draw [arrow_style] (challenge_x.south) |- (chemical_function2.west);
% \draw [noisecolor, arrow_style] (noise.east) -- (noise.east -| chemical_function.west);
% \draw [arrow_style] (eval_param) -- (chemical_function);
\draw [arrow_style] (chemical_function.east) -- node[above] {Response $y$} (extract_func.west);
\draw [arrow_style] (chemical_function2.east) -- node[above] {Response $y'$} (extract_func2.west);

% Arrows for the "Extraction" part
% \draw [arrow_style] (extract_param) -- (extract_func);
\draw [arrow_style] (helper_data_h1_in) -- (extract_func.north);
\draw [arrow_style] ([yshift=-0.5\baselineskip] extract_func.east) -- (helper_data_h1_out);
\draw [arrow_style] ([yshift=0.5\baselineskip] extract_func.east) -- (output_z);
\draw [arrow_style] (helper_data_h1_out) -- ++(0,-2cm) -| (extract_func2.north);
\draw [arrow_style] ([yshift=0.5\baselineskip]extract_func2.east) -- (output_z2);
\draw [arrow_style] ([yshift=-0.5\baselineskip] extract_func2.east) -- (helper_data_h2_out);

% Arrows for "Comparison" part
\draw [arrow_style] (output_z) -| (compare.north);
\draw [arrow_style] (output_z2) -| (compare.south);
\draw [arrow_style] (compare.east) -- ++(0.5,0) node[right] {Verification result $v$};

\end{tikzpicture}
    \caption{Illustration of how chemical functions can be used for authentication. For clarity, the evaluation parameter $\alpha_{\text{E}}$ and the extraction parameter $\alpha_{\text{X}}$ are omitted. The challenge is sent to both chemical functions and results in two noisy responses $y$ and $y'$. The extraction algorithm is applied to the first response with empty helper data to generate the output $z$ and helper data $h$. The same extraction algorithm is applied to the second response but this time with the helper data $h$ to generate the output $z'$. Finally, both outputs are compared using a verification algorithm. The verification algorithm outputs $1$ if both outputs are closer than a defined threshold and $0$ otherwise.}
    \label{fig:cf_authentication}
\end{figure}

Crucially, unclonability underpins the soundness of this inference: if an adversary cannot fabricate a distinct function that reproduces the responses of the valid instance, then matching outputs imply, with high probability, that the presented function shares the same profile as the reference, i.e., $\profile = \profile'$. Without unclonability, an attacker might construct a clone that matches $z$ on selected challenges, undermining authenticity. Hence, alongside robustness, unclonability is essential to guarantee that acceptance by $\verify$ evidences possession of the specific chemical function instance rather than a counterfeit.

Using this scheme, it is possible to realize an authentication flow for many different materials. Chemical functions can be designed at the molecular level as shown in \cite{luescher_2024_ChemicalUnclonablea}. By encapsulating them in silica particles as done in \cite{grass_2015_RobustChemical}, it is feasible to embed the \ac{cf} in the material. Applications range from protecting textiles and food with problematic origins to pharmaceuticals and critical components in high-stakes industries. An analysis of potential applications can be found in \cite{kuzdralinski_2023_UnlockingPotential, didax-project_2024_StateArt}.

\subsection{Key Generation}\label{sec:key_generation}
Similar to \acp{puf}, \acp{cf} can be used for key generation. The core idea is to treat the noisy response of a \ac{cf} as a source of randomness from which a cryptographic key is derived. With an appropriate extraction algorithm, a robust key can be reproduced from noisy responses. Only the helper data needs to be stored. Because this helper data is designed not to reveal information about the key, it can be kept in unsecured memory. When the key is required, the \ac{cf} is evaluated on the same challenge and, together with the helper data, the same key is reconstructed; the key itself never needs to be stored.

\begin{figure}[h]
    \centering
    \begin{tikzpicture}[node distance=0cm, transform shape, scale=0.7]

\definecolor{noisecolor}{HTML}{5c0404} 

% Define all styles using tikzset for better organization and reuse
\tikzset{
    function/.style={
        rectangle, 
        draw, 
        fill=bleudefrance!20, 
        align=center, 
        inner ysep=10pt,
        minimum width=1.5cm,
        minimum height=1.1cm,
        execute at begin node=\setlength{\baselineskip}{15pt}
    },
    arrow_style/.style={
        ->, 
        >=Stealth
    }, 
    dist_arrow/.style={
        ->, 
        >=Stealth, 
        thick, 
        double, 
        draw=purple!80!black
    },
    pics/person/.style={
        code={
            \draw[black, line width=1pt] (0,0) arc (90:20:0.7);
            \draw[black, line width=1pt] (0,0) arc (90:160:0.7);
            \draw[black, line width=1pt] (0,0.35) circle (0.3);
        }
    },
    pics/cfs/.style args ={#1}{
        code={
            % === NODES ===
            % Place person
            \pic (person) at (0,0.8) {person};

            % Main nodes
            \node (chemical_function) at (-1.3,-1) [function] {$\cf{\profile}$};
            \node (extract_func) at (1.3,-1) [function] {$\extract$};

            % Nodes for challenge
            \node (challenge_x) [left=0.5cm of chemical_function.west, yshift=-0.25cm] {$x$};

            % Nodes for helper data
            \node (helper_data_h) [align=center, above=0.7cm of extract_func.west, xshift=-0.5cm] {#1};
            \node (helper_data_h2) [right=0.5cm of extract_func.east, yshift=0.25cm] {$h$};

            \node (output_z) [right=0.5cm of extract_func.east, yshift=-0.25cm] {$z$};

            % === ARROWS ===
            % Challenge arrow
            \draw [arrow_style] (challenge_x) -- (challenge_x.east -| chemical_function.west);

            % Response arrow
            \draw [arrow_style] ([yshift=-0.25cm] chemical_function.east) -- node[midway, above] {$y$} ([yshift=-0.25cm] extract_func.west);

            % Helper data arrows
            \draw [arrow_style] (helper_data_h) |- ([yshift=0.25cm] extract_func.west);
            \draw [arrow_style] ([yshift=0.25cm] extract_func.east) -- (helper_data_h2);

            % Output arrow
            \draw [arrow_style] ([yshift=-0.25cm] extract_func.east) -- (output_z);
        }
    },
}

% Place multiple chemical functions in a circular layout
\pgfmathsetlengthmacro{\radius}{5cm}   
\coordinate (p1coord) at (90:\radius);
\pic (p1pic) at (p1coord) {cfs={$\emptyset$}};

\coordinate (p2coord) at (162:\radius);
\pic (p2pic) at (p2coord) {cfs={$h$}};

\coordinate (p3coord) at (224:\radius);
\pic (p3pic) at (p3coord) {cfs={$h$}};

\coordinate (p4coord) at (316:\radius);
\pic (p4pic) at (p4coord) {cfs={$h$}};

\coordinate (p5coord) at (18:\radius);
\pic (p5pic) at (p5coord) {cfs={$h$}};

% Arrows between chemical functions — use coordinates (robust)
\draw [dist_arrow] (p1coord) +(-0.6,-2) to[bend left=10] node[midway, above] {$(x,h)$} ($(p2coord) + (1,1)$);
\draw [dist_arrow] (p1coord) +(-0.2,-2.2) to[bend left=35] node[midway, left] {$(x,h)$} ($(p3coord) + (0.7,1.3)$);
\draw [dist_arrow] (p1coord) +(0.2,-2.2) to[bend right=35] node[midway, right] {$(x,h)$} ($(p4coord) + (-0.7,1.3)$);
\draw [dist_arrow] (p1coord) +(0.6,-2) to[bend right=10] node[midway, above] {$(x,h)$} ($(p5coord) + (-1,1)$);

\end{tikzpicture}
    \caption{Distributed key generation using \acp{cf}: multiple parties holding instances with practically identical profiles evaluate the same challenge and shared helper data to reconstruct the same secret key.}
    \label{fig:cf_distributed_key_generation}
\end{figure}

For this application to be secure, the \ac{cf} must satisfy unpredictability, as discussed in \Cref{sec:unpredictability}. It should be computationally infeasible for an adversary to predict the response to a new challenge, even when given multiple responses to previous challenges.

Beyond the capabilities of \acp{pf}, \acp{cf} enable a distributed key-generation mechanism. Because a \ac{cf} can be amplified during production, multiple instances with effectively identical profiles can be produced and distributed to different parties. As illustrated in \Cref{fig:cf_distributed_key_generation}, each party can evaluate its instance on the same challenge and, using the same helper data, reconstruct the same key. Since the helper data reveals no information about the key, it can be shared over an insecure channel. This enables multiple parties to derive a shared secret on demand. Moreover, any party can independently initiate the protocol by choosing a fresh challenge, generating the corresponding helper data, and deriving a new key from the resulting output $z$.

%~~~~~~~~~~~~~~~~~~~~~~~~~~~~~~~~~~~~~~~~~~~~~~~~~~
%   Section: Genomic Sequence Encryption
\section{Genomic Sequence Encryption}\label{sec:genomic-sequence-encryption}
In this section, we show how \acf{gse}, as introduced in \cite{volf_2023_CryptographyDNA}, can be used to implement \acp{cf} and analyze their properties.
As mentioned in \Cref{sec:related}, the scheme is introduced as an encryption scheme, but an application in the context of \acp{cf} is also possible.
The challenge in this case consists of a set of genomic coordinates at which edits are introduced. The authors use \acf{abe} to convert $\A / \T$ to $\G / \C$ and \acf{cbe} to convert $\C / \G$ to $\T / \A$. The response is the respective binary sequence at these coordinates, where $0$ represents an unedited reference base and $1$ indicates a present base edit. As mentioned in \cite{volf_2023_CryptographyDNA}, sequencing is noisy and therefore the response is noisy as well.
Furthermore, the authors only edited a fraction of the \ac{dna} strands ($0.1\%$ for \ac{cbe} and $0.145\%$ for \ac{abe}). 
As in \cite{volf_2023_CryptographyDNA}, we refer to the change of bases of a single strand as \emph{edit} and to the position in the whole pool of cells where the specified base is edited in some cells as the \emph{key-site}.

To complete the definition of a \ac{cfs} with a robust output, a suitable extraction algorithm is needed. Since in \cite{volf_2023_CryptographyDNA} the authors did not utilize an extraction algorithm, we could use the \emph{Fuzzy Extractor} from \cite{dodis_2004_FuzzyExtractors} that is also used in \cite{luescher_2024_ChemicalUnclonablea}. However, instead of a Reed-Solomon code, a BCH code can be used, as the response is a binary string of fixed length.

\subsection{Robustness of GSE}

The experimental results in \cite{volf_2023_CryptographyDNA} indicate a symmetric error in the reads with rate $p_e = 0.036$. For the analysis, we proceed similarly to \Cref{ex:robustness}.
Consider a random variable $X$ indicating the number of erroneous positions among $n$ read positions. The robustness is then given by 
\begin{align*}
    \rho_{\text{avg}}(\cfi, \setchal) = \prob{X \leq t}
\end{align*}
where $t$ is the error correction capability of the code in the Fuzzy Extractor. Note that we assume $X \sim \bindis{n, p_e}$ and therefore robustness can be computed using the cumulative distribution function of the binomial distribution, i.e.,
\begin{align*}
    \rho_{\text{avg}}(\cfi, \setchal) = \sum_{i=0}^{t} \binom{n}{i} p_e^i (1-p_e)^{n-i}.
\end{align*}
Some numerical values for different values of $n$ and $t$ are given in \Cref{tab:gse_robustness}.

\begin{table}
    \centering
    \resizebox{\columnwidth}{!}{
    \begin{tabular}{c|ccccccc}
        $n \setminus \frac{t}{n}$ & $0.3$ & $0.4$ & $0.5$ & $0.6$ & $0.7$ & $0.8$ & $0.9$ \\ \hline
        $10$ & $5.00\times 10^{-2}$ & $9.87\times 10^{-3}$ & $1.38\times 10^{-3}$ & $1.35\times 10^{-4}$ & $8.67\times 10^{-6}$ & $3.33\times 10^{-7}$ & $5.77\times 10^{-9}$ \\
        $20$ & $2.19\times 10^{-2}$ & $1.33\times 10^{-3}$ & $3.86\times 10^{-5}$ & $5.30\times 10^{-7}$ & $3.19\times 10^{-9}$ & $7.10\times 10^{-12}$ & $3.80\times 10^{-15}$ \\
        $50$ & $1.95\times 10^{-3}$ & $3.90\times 10^{-6}$ & $1.10\times 10^{-9}$ & $4.45\times 10^{-14}$ & $2.25\times 10^{-19}$ & $1.00\times 10^{-25}$ & $1.54\times 10^{-33}$ \\
        $100$ & $4.09\times 10^{-5}$ & $3.03\times 10^{-10}$ & $3.94\times 10^{-17}$ & $9.88\times 10^{-26}$ & $3.81\times 10^{-36}$ & $1.15\times 10^{-48}$ & $4.74\times 10^{-64}$ \\
        $200$ & $2.26\times 10^{-8}$ & $2.39\times 10^{-18}$ & $6.78\times 10^{-32}$ & $6.62\times 10^{-49}$ & $1.49\times 10^{-69}$ & $2.11\times 10^{-94}$ & $6.18\times 10^{-125}$ 
    \end{tabular}
    }
    \caption{Robustness values $\rho_{\text{avg}}(\cfi, \setchal)$ for different values of $n$ and $t$.}
    \label{tab:gse_robustness}
\end{table}

\subsection{Unclonability of GSE}

Next, we look at the unclonability of the scheme. 
We may consider the challenge to be a subset of some locations in the genome from all possible locations. Then, a new challenge could include locations that are already observed during previous challenges. Like this, the goal for the adversary becomes a coupon collector problem. 

Now we consider the scenario in which $\adversary$ performs full genome sequencing and can therefore obtain noisy information about all possible challenges.
Note that each challenge involves more than $1\,000$ reads per position which increases the effort of $\adversary$ significantly.
In this attack scenario, the adversary has different capabilities compared to what is defined in the cloning game in \Cref{def:cloning_game}. The open cloning game from \Cref{def:open_cloning_game} fits this scenario.
The final goal of the adversary is to obtain knowledge about the key-sites of the challenge, i.e., the positions in the genome that are edited.
To achieve this, we consider the case where an adversary $\adversary$ can obtain $q$ full genome reads. For simplicity of the analysis we assume that the ratio of \ac{abe} is the same as that of \ac{cbe} being $\prob{\text{Position $i$ is edited} \mid \text{Position $i$ is key-site}} = p_{\text{edit}} = 0.001$ (i.e., $0.1\%$) of all cells are edited. Let the error probability of sequencing be $p_{\text{seq-err}}$.

Given that a position $i$ is a key-site,
the probability for $\adversary$ to read an edit is given by $p_\text{detect} = p_{\text{edit}} (1-p_{\text{seq-err}}) + (1-p_{\text{edit}}) p_{\text{seq-err}}$. Hence, $\adversary$ may read an edit at position $i$ even though it was not edited. The probability to read $\ell$ edits is thus computed using a binomial distribution, i.e., 
\begin{align*}
    \prob{\ell \text{ edits read} \mid \text{Position $i$ is key-site}} 
    &= \binom{q}{\ell} p_{\text{detect}}^{\ell} (1-p_{\text{detect}})^{q-\ell}
    \\
    \prob{\ell \text{ edits read} \mid \text{Position $i$ is not key-site}} 
    &= \binom{q}{\ell} p_{\text{seq-err}}^{\ell} (1-p_{\text{seq-err}})^{q-\ell}
\end{align*}
The best choice for $\adversary$ is to employ a maximum likelihood decision for each position of the challenge. Thus, we define the following likelihood ratio
%Let us define the function $\Theta$ as follows:
\begin{align*}
    \Theta = \frac{\prob{\text{Position $i$ is key-site} \mid \ell \text{ edits read}}}{\prob{\text{Position $i$ is not key-site} \mid \ell \text{ edits read}}}.
\end{align*}
The maximum likelihood decision is to decide for $i$ to be a key-site if $\Theta > 1$ and $i$ not being a key-site otherwise. Using Bayes' theorem we can rewrite $\Theta$ as
\begin{align*}
    \Theta = \frac{\prob{\ell \text{ edits read} \mid \text{Position $i$ is key-site}} \cdot \prob{\text{Position $i$ is key-site}}}{\prob{\ell \text{ edits read} \mid \text{Position $i$ is not key-site}} \cdot \prob{\text{Position $i$ is not key-site}}}
\end{align*}

As the adversary wants to minimize the errors in the challenges, the probability that a position $i$ is a key-site relates to the challenge instead of to the whole genome. Therefore, we can assume $\prob{\text{Position $i$ is key-site}} = 0.5$ and we get
\begin{align*}
    \Theta &= \frac{p_{\text{detect}}^{\ell} (1-p_{\text{detect}})^{q-\ell}}{p_{\text{seq-err}}^{\ell} (1-p_{\text{seq-err}})^{q-\ell}}
\end{align*}
For $\Theta = 1$ we get the following decision boundary:
\begin{align*}
    \frac{\ell}{q} &=
    \frac{\log(1 - p_{\text{seq-err}}) - \log(1 - p_{\text{detect}})}{ \log(p_{\text{detect}}) -  \log(1 - p_{\text{detect}}) - \log(p_{\text{seq-err}}) + \log(1 - p_{\text{seq-err}})}.
\end{align*}
A plot of the decision boundary is given in \Cref{fig:decision_boundary}. Note that this decision boundary is not limited to the attack scenario but also applies to the honest evaluation of the \ac{cf}. Therefore, to minimize the error probability every party needs to factor in the sequencing error rate of the used sequencing technology.
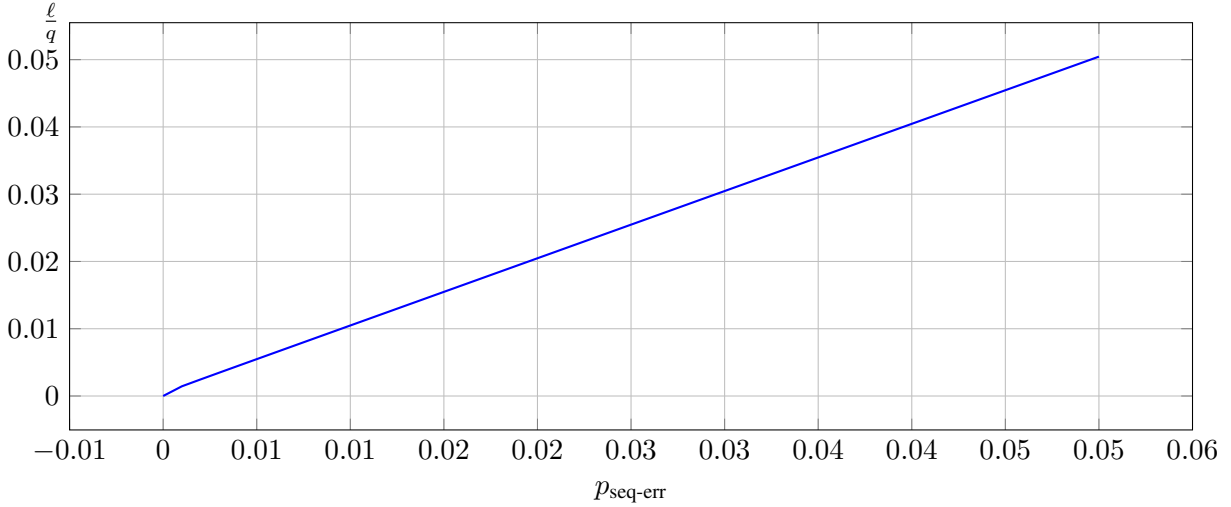
\begin{figure}[H]
    \centering
    % decision_boundary_snippet.tex

\begin{tikzpicture}
\begin{axis}[
    xlabel={$p_{\text{seq-err}}$},
    ylabel={$\frac{\ell}{q}$},
    ylabel style={at={(axis description cs:0,1)}, anchor=east, rotate=-90},
    width=\columnwidth, height=7cm,
    grid=major,
    ytick distance=0.01,
    legend pos=north east,
    legend cell align={left},
    scaled ticks=false,
    tick label style={/pgf/number format/fixed, /pgf/number format/precision=2},
  ]
  \addplot [blue, thick] table [x=p_e, y=t, col sep=comma] {images/decision_boundary.csv};
\end{axis}
\end{tikzpicture}
    
    \caption{Decision boundary for the maximum-likelihood rule. The plot shows the threshold on the fraction \(\ell/q\) of reads reporting an edit as a function of the sequencing error rate $p_{\text{seq-err}}$.}
    \label{fig:decision_boundary}
\end{figure}
Given $q$, there is an $\ell^\star$ that marks the decision threshold. If the number of reads reporting an edit exceeds this threshold, the adversary decides that position $i$ is a key-site; otherwise, it decides that it is not a key-site.
Using this decision rule, the adversary has the following probability to make an error in determining the key-site status at a position $i$ of the challenge:
\begin{align*}
    p_{\text{err}}(q) &= 0.5 \left( \sum_{j=0}^{\ell^\star} \binom{q}{j} p_{\text{detect}}^{j} (1-p_{\text{detect}})^{q-j} + \sum_{j= \ell^\star +1}^{q} \binom{q}{j} p_{\text{seq-err}}^{j} (1-p_{\text{seq-err}})^{q-j} \right)
\end{align*}
where the first sum corresponds to the case where position $i$ is a key-site and the second sum to the case where it is not a key-site. 

If we let $X$ be the random variable representing the number of errors in the final challenge of the cloning game, then the success probability of the adversary is given by
\begin{align*}
    \prob{\attack_{\adversary, \cfs{\profile}}(n) = 1}
    &=\prob{X < t} 
    \\
    &= \bincdf{t, n_{\text{chal}},p_{\text{err}} p_{\text{detect}} + (1 - p_{\text{err}}) (1 - p_{\text{detect}})}
\end{align*}
where $\bincdf{\cdot, \cdot, \cdot}$ is the binomial cumulative distribution function, $n_{\text{chal}}$ is the length of the challenge and $t$ is the error correcting capability of the code in the Fuzzy Extractor.

%~~~~~~~~~~~~~~~~~~~~~~~~~~~~~~~~~~~~~~~~~~~~~~~~~~
%   Section: Conclusions
\section{Conclusions}\label{sec:conclusions}
This work introduced \emph{chemical functions} (\acp{cf}) and the associated \emph{chemical function infrastructure} (\ac{cfi}) alongside properties relevant for security applications as a unified framework for security mechanisms realized by chemistry. Building on the formalization of physically unclonable functions, we adapted concepts from \acp{puf} to the chemical domain and defined the components of a \ac{cfi}: generation and evaluation procedures, an extraction mechanism with helper data, and a verification interface operating on noisy challenge--response behavior. This abstraction connects laboratory procedures with cryptographic reasoning and enables meaningful comparisons across implementations.

Within this framework, we gave formal definitions of the central security properties—robustness, unclonability, and unpredictability—both in finite domains and asymptotically. We specified adversary capabilities and stated security goals via explicit security games that capture the adversary’s power and success criteria. This places chemically realized mechanisms on a footing comparable to established cryptographic primitives while reflecting laboratory constraints. Instead of quantifying the cost of a specific attack on the scheme, we focused on defining the capabilities of the adversary in the security games and relating the effort required to break the scheme to the effort of evaluating the scheme via a security parameter $n$. 

We instantiated the framework with \ac{dna}-based mechanisms, including operable random \ac{dna} (\ac{ordna}) and Genomic Sequence Encryption (\ac{gse}). These case studies show how the \ac{cfi} components and security games apply in practice, how chemical design and measurement models translate into quantitative guarantees for robustness and unclonability, and how standard extraction techniques yield reproducible verification. This led to the conclusion that \ac{dna}-based schemes might be suitable for enabling security applications. Beyond authentication of materials, we outlined how \acp{cf} enable key-generation workflows, including distributed settings where multiple near-identical instances reconstruct the same secret from noisy measurements.

By abstracting chemical mechanisms while retaining measurable interfaces, the framework provides a principled basis for designing, analyzing, and comparing chemically grounded authentication and key-generation schemes. We expect it to support a transition from ad hoc designs toward interoperable, quantitatively analyzable, and practically deployable systems.

% if have a single appendix:
%\appendix[Proof of the Zonklar Equations]
% or
%\appendix  % for no appendix heading
% do not use \section anymore after \appendix, only \section*
% is possibly needed

% use appendices with more than one appendix
% then use \section to start each appendix
% you must declare a \section before using any
% \subsection or using \label (\appendices by itself
% starts a section numbered zero.)
%\appendices
%\section{First Appendix}

\appendices

%~~~~~~~~~~~~~~~~~~~~~~~~~~~~~~~~~~~~~~~~~~~~~~~~~~
%   Section: Proof of Schur-Concavity of Expected Weighted Jaccard Similarity
\section{Proof of Schur-Concavity of Expected Weighted Jaccard Similarity}\label{sec:proof_schur_concavity}
\begin{theorem}[Schur-concavity]\label{the:schur_concavity}
    Let $X_i$ and $Y_i$ denote the counts of the $i$-th $k$-mer in two profiles of length $s$ drawn i.i.d. from probability vectors $\mathbf{p}$ and $\mathbf{p}'$, respectively, where $\mathbf{p}'$ is a permutation of $\mathbf{p}$. 
    Define
    \begin{align*}
        F(\mathbf{p}) = \E_{\pi \in S_{4^k}}\Bigg[\sum_{i=1}^{4^k} \min\big(X_i, Y_i\big)\Bigg],
    \end{align*}
    where $S_{4^k}$ is the symmetric group on $4^k$ elements and the expectation is over a uniformly random permutation $\pi$. Then $F(\mathbf{p})$ is Schur-concave.
\end{theorem}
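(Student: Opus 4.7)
The plan is to collapse the random permutation $\pi$ into a pairwise sum and then apply the Schur-Ostrowski criterion. First, I would exploit that $X_i \sim \bindis{s, p_i}$ and, given $\pi$, $Y_i \sim \bindis{s, p_{\pi^{-1}(i)}}$ are independent, together with the fact that $\pi^{-1}(i)$ is uniform over $\{1, \ldots, 4^k\}$ when $\pi$ is uniform on $S_{4^k}$. Linearity of expectation then yields the key reformulation
\begin{align*}
    F(\mathbf{p}) = \frac{1}{4^k}\sum_{i,j=1}^{4^k} h(p_i, p_j),
\end{align*}
where $h(u, v) := \E[\min(U, V)]$ for independent $U \sim \bindis{s, u}$ and $V \sim \bindis{s, v}$. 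This expression is a smooth symmetric polynomial in $\mathbf{p}$, so it is a valid target for the Schur-Ostrowski criterion on the simplex.

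Next, since $h(u, v) = h(v, u)$, a short calculation yields $\partial_k F = \frac{2}{4^k}\sum_b h_1(p_k, p_b)$, where $h_1(u, v) := \partial_u h(u, v)$. By the Schur-Ostrowski criterion, $F$ is Schur-concave precisely when $(p_i - p_j)\bigl(\partial_i F - \partial_j F\bigr) \leq 0$ for all $i \neq j$. Plugging in the formula for $\partial_k F$, this reduces to showing that $u \mapsto h_1(u, v)$ is nonincreasing for each fixed $v$, i.e., that $h(\cdot, v)$ is concave.

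The concavity lemma is the main analytical step, and I would establish it via a Bernstein-polynomial argument. For each fixed $w \in \{0, 1, \ldots, s\}$,
\begin{align*}
    \E_{U \sim \bindis{s, u}}\bigl[\min(U, w)\bigr] = \sum_{k=0}^{s} \binom{s}{k} u^k (1-u)^{s-k} \min(k, w)
\end{align*}
is the degree-$s$ Bernstein polynomial of $x \mapsto \min(sx, w)$ on $[0, 1]$, which is concave as the minimum of two affine functions. Invoking the classical fact that Bernstein polynomials preserve concavity yields concavity in $u$. Taking a nonnegative combination over $V \sim \bindis{s, v}$ preserves concavity, so $h(\cdot, v)$ is concave, and by symmetry the same holds in $v$. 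Combined with the reductions above, this establishes Schur-concavity of $F$. The hard part is recognizing that only concavity in each argument \emph{separately} is needed: $h$ is not jointly convex or concave on $[0, 1]^2$ in general (at $s = 1$ one has $h(u, v) = uv$, whose Hessian is indefinite), so approaches based on joint convexity or a naive direct coupling of the two multinomials under a Robin Hood transfer would fail. The Schur-Ostrowski reformulation is exactly what lets the one-variable concavity suffice.
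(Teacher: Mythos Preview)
Your proof is correct and takes a genuinely different route from the paper. The paper argues symmetry plus concavity of $F$ (which together imply Schur-concavity), establishing concavity by fixing $\pi$, coupling the mixed multinomial via a per-trial mixture, and invoking concavity of $\min$. You instead collapse the permutation average into the closed form $F(\mathbf{p})=\frac{1}{4^k}\sum_{i,j}h(p_i,p_j)$ and apply the Schur--Ostrowski criterion, reducing the problem to the \emph{one-variable} concavity of $u\mapsto h(u,v)$, which you then obtain cleanly from the concavity-preserving property of Bernstein polynomials.

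Your approach is not only different but more robust, and your remark that $h$ is not jointly concave (e.g.\ $h(u,v)=uv$ at $s=1$) pinpoints exactly the obstruction the paper's strategy runs into. In fact, for $\pi=\mathrm{id}$ and $s=1$ one gets $F_\pi(\mathbf{p})=\sum_i p_i^2$, which is \emph{convex}, so the paper's claim that each $F_\pi$ is concave cannot hold as stated, and the per-trial-mixture coupling does not deliver the pointwise inequality written there. Your Schur--Ostrowski reduction sidesteps this entirely: it only needs $u\mapsto h_1(u,v)$ to be nonincreasing, which coordinate-wise concavity provides. Isolating the minimal amount of concavity actually required is precisely what makes your argument go through where a direct concavity-of-$F$ argument stalls.
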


\begin{proof}
    To establish Schur-concavity of $F(\mathbf{p})$ with respect to the probability vector $\mathbf{p}$ over $k$-mers, it suffices to prove symmetry and concavity. 

    \subsection*{Symmetry}
    Let $\sigma$ be any permutation of the index set $\{1,2,\ldots,4^k\}$. Consider $\mathbf{p}_{\sigma} = (p_{\sigma(1)},\ldots,p_{\sigma(4^k)})$ and write
    \begin{align*}
        F(\mathbf{p}_{\sigma}) 
        = \E_{\pi \in S_{4^k}}\left[\sum_{i=1}^{4^k} \min\big(X_i^{\sigma}, Y_i^{\sigma}\big)\right],
    \end{align*}
    where $X^{\sigma}$ and $Y^{\sigma}$ are the multinomial count vectors obtained from $\mathbf{p}_{\sigma}$ and $\mathbf{p}'_{\sigma}$, respectively. Since $\mathbf{p}'$ is a permutation of $\mathbf{p}$, we have $\mathbf{p}'_{\sigma} = (\mathbf{p}_{\sigma})_{\pi} = \mathbf{p}_{\sigma\pi}$.

    The probability of observing a specific count vector $\mathbf{x}^{\sigma} = (x_1^{\sigma},\ldots,x_{4^k}^{\sigma})$ under $\mathbf{p}_{\sigma}$ is
    \begin{align*}
        \prob{X^{\sigma} = \mathbf{x}^{\sigma}} = \frac{s!}{x_1^{\sigma}! x_2^{\sigma}! \ldots x_{4^k}^{\sigma}!} \prod_{i=1}^{4^k} p_{\sigma(i)}^{x_i^{\sigma}}
    \end{align*}
    With the change of variables $x_i = x_{\sigma^{-1}(i)}^{\sigma}$ for all $i$, we obtain
    \begin{align*}
        \prod_{i=1}^{4^k} \big(p_{\sigma(i)}\big)^{x_{\sigma(i)}} 
        = \prod_{i=1}^{4^k} p_i^{\,x_i^{\sigma}}.
    \end{align*}
    Hence $X^{\sigma}$ has the same distribution as $X$; the same reasoning applies to $Y^{\sigma}$. The summand $\sum_{i} \min(X_i,Y_i)$ is invariant under index relabeling, and since $\{\sigma\pi : \pi \in S_{4^k}\}=S_{4^k}$, the expectation is unchanged. Therefore, $F(\mathbf{p}_{\sigma})=F(\mathbf{p})$, i.e., $F$ is symmetric.

    \subsection*{Concavity}
    To show concavity, we must prove that for any probability vectors $\mathbf{p}$ and $\mathbf{q}$ over $k$-mers and any $\lambda \in [0,1]$,
    \begin{align*}
        F(\lambda \mathbf{p} + (1-\lambda) \mathbf{q}) 
        \;\geq\; \lambda F(\mathbf{p}) + (1-\lambda) F(\mathbf{q}).
    \end{align*}

    For a fixed permutation $\pi$, define $F_{\pi}(\mathbf{p}) = \sum_{i=1}^{4^k} \min(X_i, Y_i)$. Then
    \begin{align*}
        F(\mathbf{p}) 
        = \E_{\pi \in S_{4^k}}\big[F_{\pi}(\mathbf{p})\big] 
        = \frac{1}{(4^k)!} \sum_{\pi \in S_{4^k}} F_{\pi}(\mathbf{p}).
    \end{align*}
    Hence, it is enough to show that $F_{\pi}(\mathbf{p})$ is concave for each fixed $\pi$, since a positive linear combination of concave functions is concave.

    Let $X^{(\mathbf{p})}$ and $X^{(\mathbf{q})}$ be the multinomial count vectors corresponding to $\mathbf{p}$ and $\mathbf{q}$, respectively, and let $X^{\mathrm{comb}}$ correspond to $\lambda\mathbf{p} + (1-\lambda)\mathbf{q}$. Define $Y^{(\mathbf{p})}$, $Y^{(\mathbf{q})}$, and $Y^{\mathrm{comb}}$ analogously under the permuted probabilities.

    Observe that $X^{\mathrm{comb}}$ can be realized by a per-trial mixture: for each of the $s$ draws, with probability $\lambda$ the draw is from $\mathbf{p}$ and with probability $1-\lambda$ it is from $\mathbf{q}$. A similar construction holds for $Y^{\mathrm{comb}}$. Using the concavity of the function $\min(a,b)$ and summing over $i$, we obtain
    \begin{align*}
        \sum_{i=1}^{4^k} \min\big(X_i^{\mathrm{comb}}, Y_i^{\mathrm{comb}}\big)
        \;\geq\; \lambda \sum_{i=1}^{4^k} \min\big(X_i^{(\mathbf{p})}, Y_i^{(\mathbf{p})}\big)
        + (1-\lambda) \sum_{i=1}^{4^k} \min\big(X_i^{(\mathbf{q})}, Y_i^{(\mathbf{q})}\big).
    \end{align*}
    Taking expectations preserves the inequality and yields the desired concavity for each fixed $\pi$, and hence for $F$.

\end{proof}

% use section* for acknowledgement
\section*{Acknowledgment}
This work was funded by the European Union (DiDAX, 101115134). Views and opinions expressed are however those of the authors only and do not necessarily reflect those of the European Union or the European Research Council Executive Agency. Neither the European Union nor the granting authority can be held responsible for them.

% Can use something like this to put references on a page
% by themselves when using endfloat and the captionsoff option.
\ifCLASSOPTIONcaptionsoff
  \newpage
\fi

% trigger a \newpage just before the given reference
% number - used to balance the columns on the last page
% adjust value as needed - may need to be readjusted if
% the document is modified later
%\IEEEtriggeratref{8}
% The "triggered" command can be changed if desired:
%\IEEEtriggercmd{\enlargethispage{-5in}}

% references section

\bibliographystyle{ieeetr}
\bibliography{DNA_authentication}% common bib file

\begin{thebibliography}{10}

\bibitem{luescher_2024_ChemicalUnclonablea}
A.~M. Luescher, A.~L. Gimpel, W.~J. Stark, R.~Heckel, and R.~N. Grass, ``Chemical unclonable functions based on operable random {{DNA}} pools,'' {\em Nature Communications}, vol.~15, p.~2955, Apr. 2024.

\bibitem{volf_2023_CryptographyDNA}
V.~Volf, S.~Zhang, K.~M. Song, S.~Qian, F.~Chen, and G.~M. Church, ``Cryptography in the {{DNA}} of living cells enabled by multi-site base editing,'' Nov. 2023.

\bibitem{katz_2020_IntroductionModern}
J.~Katz and Y.~Lindell, {\em Introduction to {{Modern Cryptography}}}.
\newblock {Chapman and Hall/CRC}, 3~ed., Dec. 2020.

\bibitem{nationalinstituteofstandardsandtechnologyus_2016_SubmissionRequirements}
{National Institute of Standards {and} Technology (US)}, ``Submission requirements and evaluation criteria for the post-quantum cryptography standardization process,'' 2016.

\bibitem{armknecht_2010_MemoryLeakageResilient}
F.~Armknecht, R.~Maes, A.-R. Sadeghi, B.~Sunar, and P.~Tuyls, ``Memory {{Leakage-Resilient Encryption Based}} on {{Physically Unclonable Functions}},'' in {\em Towards {{Hardware-Intrinsic Security}}} (A.-R. Sadeghi and D.~Naccache, eds.), pp.~135--164, Berlin, Heidelberg: Springer Berlin Heidelberg, 2010.

\bibitem{skoric_2005_RobustKey}
B.~{\v S}kori{\'c}, P.~Tuyls, and W.~Ophey, ``Robust {{Key Extraction}} from {{Physical Uncloneable Functions}},'' in {\em Applied {{Cryptography}} and {{Network Security}}} (D.~Hutchison, T.~Kanade, J.~Kittler, J.~M. Kleinberg, F.~Mattern, J.~C. Mitchell, M.~Naor, O.~Nierstrasz, C.~Pandu~Rangan, B.~Steffen, M.~Sudan, D.~Terzopoulos, D.~Tygar, M.~Y. Vardi, G.~Weikum, J.~Ioannidis, A.~Keromytis, and M.~Yung, eds.), vol.~3531, pp.~407--422, Berlin, Heidelberg: Springer Berlin Heidelberg, 2005.

\bibitem{lim_2005_ExtractingSecret}
D.~Lim, J.~Lee, B.~Gassend, G.~Suh, M.~Van~Dijk, and S.~Devadas, ``Extracting secret keys from integrated circuits,'' {\em IEEE Transactions on Very Large Scale Integration (VLSI) Systems}, vol.~13, pp.~1200--1205, Oct. 2005.

\bibitem{pappu_2002_PhysicalOneWay}
R.~Pappu, B.~Recht, J.~Taylor, and N.~Gershenfeld, ``Physical {{One-Way Functions}},'' {\em Science}, vol.~297, pp.~2026--2030, Sept. 2002.

\bibitem{gassend_2002_SiliconPhysical}
B.~Gassend, D.~Clarke, M.~Van~Dijk, and S.~Devadas, ``Silicon physical random functions,'' in {\em Proceedings of the 9th {{ACM}} Conference on {{Computer}} and Communications Security}, (Washington, DC USA), pp.~148--160, ACM, Nov. 2002.

\bibitem{katzenbeisser_2011_RecyclablePUFs}
S.~Katzenbeisser, {\"U}.~Kocaba{\c s}, V.~Van Der~Leest, A.-R. Sadeghi, G.-J. Schrijen, and C.~Wachsmann, ``Recyclable {{PUFs}}: Logically reconfigurable {{PUFs}},'' {\em Journal of Cryptographic Engineering}, vol.~1, pp.~177--186, Nov. 2011.

\bibitem{vandijk_2012_PhysicalUnclonable}
M.~{van Dijk} and U.~R{\"u}hrmair, ``Physical unclonable functions in cryptographic protocols: {{Security}} proofs and impossibility results.'' Cryptology ePrint Archive, Paper 2012/228, 2012.

\bibitem{armknecht_2011_FormalizationSecurity}
F.~Armknecht, R.~Maes, A.-R. Sadeghi, F.-X. Standaert, and C.~Wachsmann, ``A {{Formalization}} of the {{Security Features}} of {{Physical Functions}},'' in {\em 2011 {{IEEE Symposium}} on {{Security}} and {{Privacy}}}, (Oakland, CA, USA), pp.~397--412, IEEE, May 2011.

\bibitem{clelland_1999_HidingMessages}
C.~T. Clelland, V.~Risca, and C.~Bancroft, ``Hiding messages in {{DNA}} microdots,'' {\em Nature}, vol.~399, pp.~533--534, June 1999.

\bibitem{vippathalla_2023_SecureStorage}
P.~K. Vippathalla and N.~Kashyap, ``The {{Secure Storage Capacity}} of a {{DNA Wiretap Channel Model}},'' {\em IEEE Transactions on Information Theory}, vol.~69, pp.~5550--5569, Sept. 2023.

\bibitem{leier_2000_CryptographyDNA}
A.~Leier, C.~Richter, W.~Banzhaf, and H.~Rauhe, ``Cryptography with {{DNA}} binary strands,'' {\em Biosystems}, vol.~57, pp.~13--22, June 2000.

\bibitem{grass_2020_GenomicEncryption}
R.~N. Grass, R.~Heckel, C.~Dessimoz, and W.~J. Stark, ``Genomic {{Encryption}} of {{Digital Data Stored}} in {{Synthetic DNA}},'' {\em Angewandte Chemie International Edition}, vol.~59, no.~22, pp.~8476--8480, 2020.

\bibitem{li_2022_GeneticPhysical}
Y.~Li, M.~M. Bidmeshki, T.~Kang, C.~M. Nowak, Y.~Makris, and L.~Bleris, ``Genetic physical unclonable functions in human cells,'' {\em Science Advances}, vol.~8, p.~eabm4106, May 2022.

\bibitem{ruhrmair_2009_FoundationsPhysical}
U.~R{\"u}hrmair, J.~S{\"o}lter, and F.~Sehnke, ``On the foundations of physical unclonable functions.'' Cryptology ePrint Archive, Paper 2009/277, 2009.

\bibitem{ruhrmair_2010_ModelingAttacks}
U.~R{\"u}hrmair, F.~Sehnke, J.~S{\"o}lter, G.~Dror, S.~Devadas, and J.~Schmidhuber, ``Modeling attacks on physical unclonable functions,'' in {\em Proceedings of the 17th {{ACM}} Conference on {{Computer}} and Communications Security}, (Chicago Illinois USA), pp.~237--249, ACM, Oct. 2010.

\bibitem{woidasky_2020_InorganicFluorescent}
J.~Woidasky, I.~Sander, A.~Schau, J.~Moesslein, P.~Wendler, D.~Wacker, G.~Gao, D.~Kirchenbauer, V.~Kumar, D.~Busko, I.~A. Howard, B.~S. Richards, A.~Turshatov, S.~Wiethoff, and C.~{Lang-Koetz}, ``Inorganic fluorescent marker materials for identification of post-consumer plastic packaging,'' {\em Resources, Conservation and Recycling}, vol.~161, p.~104976, Oct. 2020.

\bibitem{yang_2024_DigitalBarcodes}
Z.~Yang, J.~Chen, Y.~Xiao, C.~Yang, C.-X. Zhao, D.~Chen, and D.~A. Weitz, ``Digital {{Barcodes}} for {{High-Throughput Screening}},'' {\em Chem \& Bio Engineering}, vol.~1, pp.~2--12, Feb. 2024.

\bibitem{dodis_2004_FuzzyExtractors}
Y.~Dodis, L.~Reyzin, and A.~Smith, ``Fuzzy {{Extractors}}: {{How}} to {{Generate Strong Keys}} from {{Biometrics}} and {{Other Noisy Data}},'' in {\em Advances in {{Cryptology}} - {{EUROCRYPT}} 2004} (T.~Kanade, J.~Kittler, J.~M. Kleinberg, F.~Mattern, J.~C. Mitchell, O.~Nierstrasz, C.~Pandu~Rangan, B.~Steffen, D.~Terzopoulos, D.~Tygar, M.~Y. Vardi, C.~Cachin, and J.~L. Camenisch, eds.), vol.~3027, pp.~523--540, Berlin, Heidelberg: Springer Berlin Heidelberg, 2004.

\bibitem{liu_2022_CMashFast}
S.~Liu and D.~Koslicki, ``{{CMash}}: Fast, multi-resolution estimation of k-mer-based {{Jaccard}} and containment indices,'' {\em Bioinformatics}, vol.~38, pp.~i28--i35, June 2022.

\bibitem{broder_1998_ResemblanceContainment}
A.~Broder, ``On the resemblance and containment of documents,'' in {\em Proceedings. {{Compression}} and {{Complexity}} of {{SEQUENCES}} 1997 ({{Cat}}. {{No}}.{{97TB100171}})}, (Salerno, Italy), pp.~21--29, IEEE Comput. Soc, 1998.

\bibitem{haveliwala_2000_ScalableTechniques}
T.~Haveliwala, A.~Gionis, and P.~Indyk, ``Scalable techniques for clustering the web,'' 2000.

\bibitem{juels_2006_FuzzyVault}
A.~Juels and M.~Sudan, ``A {{Fuzzy Vault Scheme}},'' {\em Designs, Codes and Cryptography}, vol.~38, pp.~237--257, Feb. 2006.

\bibitem{marshall_2011_InequalitiesTheory}
A.~W. Marshall, I.~Olkin, and B.~C. Arnold, {\em Inequalities: Theory of Majorization and Its Applications}.
\newblock New York: Springer Science+Business Media, LLC, 2nd ed~ed., 2011.

\bibitem{meiser_2020_DNASynthesis}
L.~C. Meiser, J.~Koch, P.~L. Antkowiak, W.~J. Stark, R.~Heckel, and R.~N. Grass, ``{{DNA}} synthesis for true random number generation,'' {\em Nature Communications}, vol.~11, p.~5869, Nov. 2020.

\bibitem{grass_2015_RobustChemical}
R.~N. Grass, R.~Heckel, M.~Puddu, D.~Paunescu, and W.~J. Stark, ``Robust {{Chemical Preservation}} of {{Digital Information}} on {{DNA}} in {{Silica}} with {{Error}}-{{Correcting Codes}},'' {\em Angewandte Chemie International Edition}, vol.~54, pp.~2552--2555, Feb. 2015.

\bibitem{kuzdralinski_2023_UnlockingPotential}
A.~Kuzdrali{\'n}ski, M.~Mi{\'s}kiewicz, H.~Szczerba, W.~Mazurczyk, J.~Nivala, and B.~Ksie{\.z}opolski, ``Unlocking the potential of {{DNA-based}} tagging: Current market solutions and expanding horizons,'' {\em Nature Communications}, vol.~14, p.~6052, Sept. 2023.

\bibitem{didax-project_2024_StateArt}
{DiDAX-Project}, ``State of the art and commercial needs for authentication and in-product documentation,'' Nov. 2024.

\end{thebibliography}
% argument is your BibTeX string definitions and bibliography database(s)
%\bibliography{IEEEabrv,../bib/paper}
%
% <OR> manually copy in the resultant .bbl file
% set second argument of \begin to the number of references
% (used to reserve space for the reference number labels box)
%
%\begin{thebibliography}{1}

%\bibitem{IEEEhowto:kopka}
%H.~Kopka and P.~W. Daly, \emph{A Guide to \LaTeX}, 3rd~ed.\hskip 1em plus
%  0.5em minus 0.4em\relax Harlow, England: Addison-Wesley, 1999.
%\end{thebibliography}

% biography section
% 
% If you have an EPS/PDF photo (graphicx package needed) extra braces are
% needed around the contents of the optional argument to biography to prevent
% the LaTeX parser from getting confused when it sees the complicated
% \includegraphics command within an optional argument. (You could create
% your own custom macro containing the \includegraphics command to make things
% simpler here.)
%\begin{biography}[{\includegraphics[width=1in,height=1.25in,clip,keepaspectratio]{mshell}}]{Michael Shell}
% or if you just want to reserve a space for a photo:

% You can push biographies down or up by placing
% a \vfill before or after them. The appropriate
% use of \vfill depends on what kind of text is
% on the last page and whether or not the columns
% are being equalized.

%\vfill

% Can be used to pull up biographies so that the bottom of the last one
% is flush with the other column.
%\enlargethispage{-5in}

% that's all folks
\end{document}